\def\ps@headings{
\def\@oddhead{\mbox{}\scriptsize\rightmark \hfil \thepage}%
\def\@evenhead{\scriptsize\thepage \hfil \leftmark\mbox{}}%
\def\@oddfoot{}%
\def\@evenfoot{}}
\makeatother \pagestyle{headings}
\begin{document}
\title{ Effective Capacity of  URLLC over Parallel Fading Channels with Imperfect Channel State Information}
\author{\IEEEauthorblockN{Hongsen  Peng, \IEEEmembership{Graduate Student Member,~ IEEE} \\
Meixia Tao,\IEEEmembership{ Fellow,~IEEE}}

\thanks{\quad This work is supported by the National Natural Science Foundation of China under grant 61941106.}
\thanks{\quad The authors are with Department of Electronic Engineering, Shanghai Jiao Tong University, Shanghai, China. Emails: \{hs-peng, mxtao\}@sjtu.edu.cn. }

}

\maketitle

\begin{abstract}
This paper investigates the effective capacity of a point-to-point ultra-reliable low latency communication (URLLC) transmission over multiple parallel sub-channels at finite blocklength (FBL) with imperfect channel state information (CSI).
Based on reasonable assumptions and approximations, we derive the effective capacity as a function of  the pilot length, decoding error probability, transmit power and the sub-channel number. Then we reveal significant impact of the above parameters on the effective capacity.  A closed-form lower bound of the effective capacity is derived and an alternating optimization based algorithm is proposed to find the optimal pilot length and decoding error probability.
Simulation results validate our theoretical analysis and show that the closed-form lower bound is very tight. In addition, through the simulations of the optimized effective capacity, insights for pilot length and decoding error probability optimization  are provided  to evaluate the optimal parameters in realistic systems.
\end{abstract}
\begin{IEEEkeywords}
URLLC, parallel fading channels,  finite blocklength regime, imperfect CSI, effective capacity
\end{IEEEkeywords}
\newpage
\section{Introduction}
\label{Introduction}

Ultra-reliable and low-latency communication (URLLC) is one of the main service scenarios supported by 5G wireless networks and beyond. It can enable many mission-critical applications such as autonomous driving, industrial automation,  remote surgery and so on \cite{8472907}.
 URLLC entails much stricter quality-of-service (QoS) requirements including a small packet error probability between  $10^{-9}\sim10^{-5}$ and a very low latency around 1 ms \cite{8472907, 8705373, 8469808, 8452975}.
 For URLLC, the latency is defined as end-to-end delay \cite{8472907, 8705373}, which not only contains the transmission delay in the physical layer but also includes the  queueing delay in the buffer.
The reliability is defined as the probability that a finite-size data packet is successfully delivered by the transmitter to the receiver within a target  time duration \cite{8472907}.  Therefore, the reliability inherently includes the latency constraint and we can also call reliability as delay violation probability as it requires the latency constraint satisfied.
 Conventionally, due to the stochastic nature of wireless channels, it is quite challenging to support communications that have stringent requirement for delay and reliability.

To ensure low transmission latency, the packet size of URLLC is usually very small. Under the short-packet, or equivalently, finite blocklength (FBL), transmission constraint,  the conventional Shannon formula  derived from infinite blocklength  model with error free transmission   is not applicable \cite{7529226}. Specifically, transmission at FBL  will bring non-negligible loss in the achievable channel coding rate and non-zero probability of decoding error \cite{5452208}.
Another related and important problem is to study the impact of imperfect channel state information (CSI) under the FBL constraint \cite{8403963}. Training based CSI acquisition will bring the tradeoff  between the pilot length and the payload length.
On the one hand, longer pilot length will improve the accuracy of the channel estimation but reduce the remaining blocklength for payload transmission.  On the other hand, shorter pilot length will deteriorate the estimation of the channel while increase the blocklength for the payload transmission.
{The aim of this work is to investigate the throughput performance of URLLC from the link layer perspective in term of  effective capacity,  at FBL taking  both statistical delay requirements and imperfect CSI into consideration. }
\subsection{Related Works}

Information-theoretic study on the channel coding rate in the FBL regime was presented in \cite{5452208} recently, where the authors proposed an accurate approximation (i.e., normal approximation) of the achievable rate over additive white Gaussian noise (AWGN) channels as a function of the signal-to-noise  ratio (SNR), blocklength of the codeword, and decoding error probability.
The extension to multiple-antenna fading channels was considered in \cite{6802432,7362178}.
These new results have inspired many new research papers in different aspects.

Many of the existing works on URLLC have focused on how to achieve the stringent QoS requirement in the physical layer.
 Xu $et$~$ al.$  investigated the energy-efficient packet scheduling problem over quasi-static block fading channels \cite{7463506}. In \cite{8345745}, Sun $et$~$ al.$ analyzed a  nonorthogonal multiple access (NOMA) based downlink low latency transmission problem.  Hu $et$~$ al.$ \cite{8259329} investigated the relay based cooperative URLLC  transmission. Makki  $et$~$ al.$ \cite{6888474} studied the hybrid automatic repeat request  (ARQ) protocol in block fading channels. However, all of  these works assumed perfect CSI at both the transmitter and the receiver.
There are also many papers that consider both FBL and imperfect CSI. The works \cite{7996416,8885564,9013958}  investigated communications in block fading channels with pilot length optimizations under the  assumption of the imperfect CSI  estimation in the physical layer. Cheng $et$~$ al.$ \cite{9120794} investigated the  resource allocation problem for downlink  orthogonal frequency-division multiple-access (OFDMA) system with bounded CSI error  model. Ren $et$~$ al.$ \cite{9044874} studied the power allocation problem of the pilot and payload in massive MIMO system with imperfect CSI in the uplink.

According to the inherent relationship between delay and reliability, it is also essential to investigate the throughput performance of URLLC by considering delay violation probability  from the link layer perspective.
In general, there are two well known analysis tools  to study link layer performance, namely,  effective capacity and stochastic network calculus (SNC).
Effective capacity is a useful measure of system throughput with statistical QoS guarantee \cite{1210731} and has been widely studied, for example in {\cite{Amjad2019, 8913775}}.  It captures the maximum constant data arrival rate that a given service process can support under certain delay violation or buffer overflow probability, also known as,  statistical QoS constraints.
On the other hand, stochastic network calculus is used to characterize the non-asymptotic probabilistic performance bounds in terms of the distribution of fading channels and arrival processes \cite{6932489}, relaxing the intractable delay target violation probability to the tractable upper bound.

 Effective capacity in the FBL regime  was first analyzed by Gursoy in  \cite{GursoyDec.2013} with queueing constraints. It is proved that there exists a unique decoding error probability that maximizes the effective capacity.
 Hu $et$~$ al.$ \cite{8402240} investigated the optimal multiuser  power allocation problem to maximize the normalized sum effective capacity with fixed decoding error probability.
The authors in \cite{8645712} obtained a closed-form approximation of  the effective capacity under Rayleigh fading channel for machine type communications (MTC) through proper expansion and then investigated the  power-delay tradeoff for fixed effective capacity.
It is worthwhile to mention that, all the above works consider the channel coding performed only within one fading block. The following works consider the channel coding across multiple blocks.
 Choi \cite{8543235} studied the effective capacity of parallel multi-channel for low latency communication  for both infinite blocklength and FBL.  Therein, FBL is only considered for the statistical CSI case where the channel coding rate  remains unchanged for all fading blocks.
 Qiao $et$~$ al.$  \cite{QiaoAug.2019} investigated the effective capacity with FBL channel coding over multiple coherence blocks
 and  revealed the relationships between the decoding error probability, coherence block  number  and the effective capacity.
 Note that the tradeoffs among these system parameters are discussed via simulation results only.
Nevertheless,  none of these effective capacity related works considered the imperfect CSI scenario and finite blocklength at the same time.

The tool of SNC also has been applied for the link layer performance analysis for URLLC. Specifically, Xiao $et$~$ al.$\cite{8638930, 8649644} investigated the power allocation problem and  analyzed the delay performance  in the link layer in downlink NOMA systems utilizing SNC. But these two papers did not take FBL channel coding and imperfect CSI into consideration.
Schiessl $et$~$ al.$\cite{8421266, 8640115}  investigated the delay performance through a newly derived closed-form but approximate decoding error probability by taking both FBL and imperfect CSI into consideration with SNC. These two paper provided  rate adaption strategies which lead to a minimum delay violation probability.
{However, none of these above SNC related works derived explicit closed-form relation between the delay violation probability and the considered parameters.}

In this paper, we are primarily interested in the throughput performance with QoS guarantee instead of delay performance. Therefore, effective capacity is more suitable and we employ it as our analysis tool in this work.

\subsection{Our Contributions}
This work provides an analytical study on the performance in the link layer in terms of effective capacity at  FBL with imperfect CSI over multiple parallel fading channels.  Instantaneous CSI is assumed unavailable at the transmitter so that the transmitter must send pilot sequence to the receiver for channel estimation, then the receiver feeds back the estimated CSI.
The main contributions and findings are as follows:
\begin{itemize}
\item {{Effective Capacity in Exponential Integral Expression and System Parameters' Impact}: We first derive an expression in the form of exponential integral for the effective capacity of the parallel channels following Rayleigh fading at FBL with imperfect CSI. This expression facilitates the evaluation the effective capacity with respect to key system parameters, including the pilot length, the decoding error probability, the transmit power  as well as the  sub-channel number. More specifically, it is proved that the effective capacity is concave with respect to the pilot length $n_t$ and its inner term is also concave with respect to the decoding error probability $\varepsilon$ respectively. This indicates that there exists a unique optimal pilot length $n_t^*$ and unique optimal decoding error probability $\varepsilon^*$ that maximize the effective capacity respectively.
}

\item
 {{Closed-Form Lower Bound of Effective Capacity and Optimization Algorithm}: {With reasonable approximations, we  derive a closed-form lower bound of the effective capacity possessing the same properties with respect to the aforementioned parameters.
 Based on the closed-form lower bound, an alternating optimization-based algorithm is proposed to find the optimal pilot length and decoding error probability for maximizing the effective capacity at given transmit SNR and sub-channel number.}}

\item   {{Numerical Validation and Key Observations}: Numerical results validate that the lower bound of  effective capacity is quite tight over a wide range of system parameters.
   Furthermore, through the numerical results of the optimized effective capacity, it is shown that the optimal decoding error probability $\varepsilon^*$ decreases exponentially as the  sub-channel number $m$ or transmit SNR $\gamma_0$ (in dB) increases. The optimal pilot length $n_t^*$ keeps  constant as the  sub-channel number increases. When the transmit SNR $\gamma_0$  increases, the optimal pilot length $n_t^*$ decreases gradually and eventually keeps constant.}
\end{itemize}

The remainder of this paper is organized as follows:  Sec. II  presents the system model and  preliminaries of the FBL channel coding. Our main contributions are presented in Sec. III, identifying the impact of considered system parameters on the effective capacity and  then providing a closed-form lower bound of the effective capacity.  Numerical results are presented in Sec. IV. Finally, Sec. V concludes this paper.

\section{System Model and Effective Capacity}
 \begin{figure*}
\begin{centering}
\includegraphics[scale=0.62]{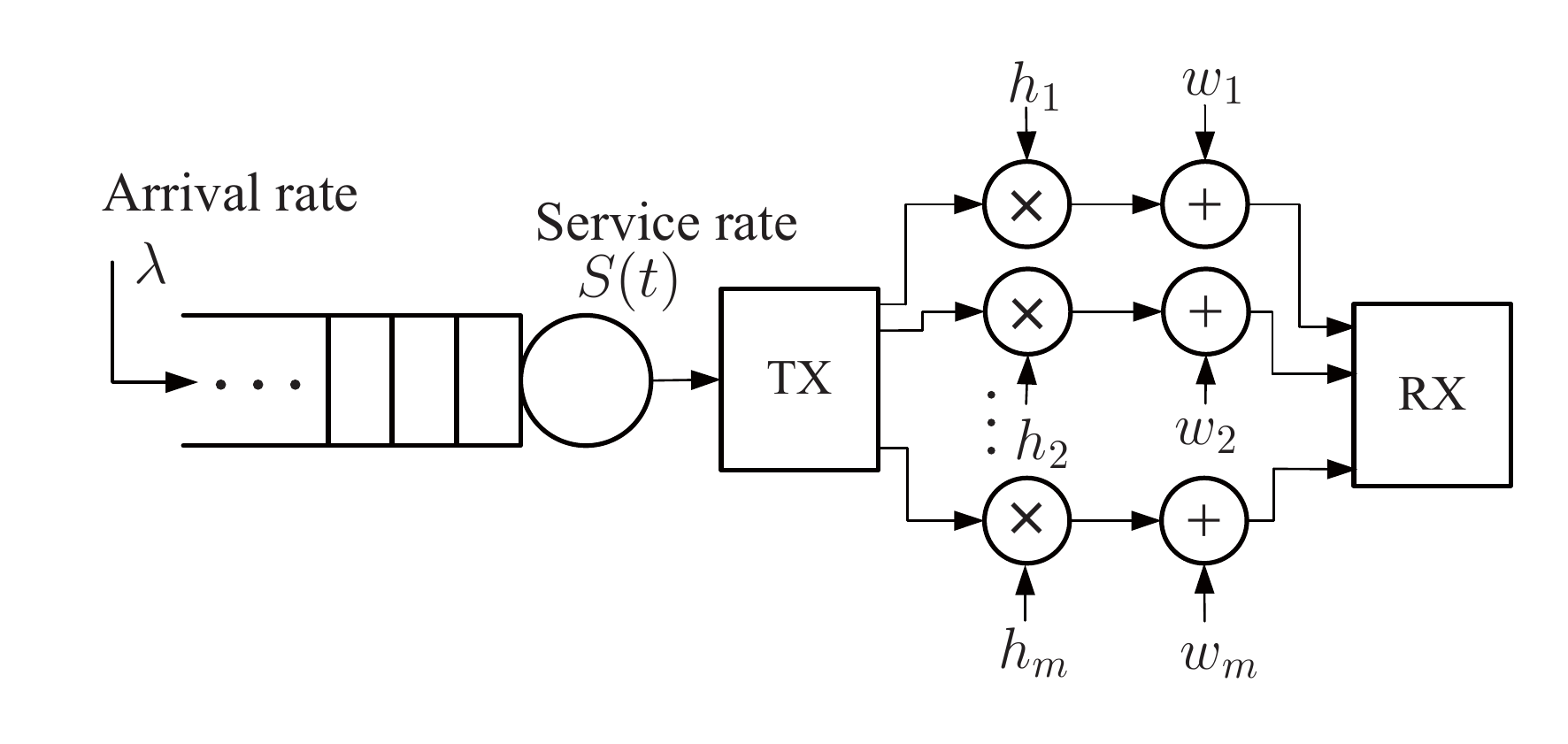}
 \caption{\small{System diagram for multiple parallel fading channels URLLC transmission.}}\label{fig:systemmodel}
\end{centering}
\end{figure*}

\begin{figure*}
\begin{centering}
\includegraphics[scale=0.60]{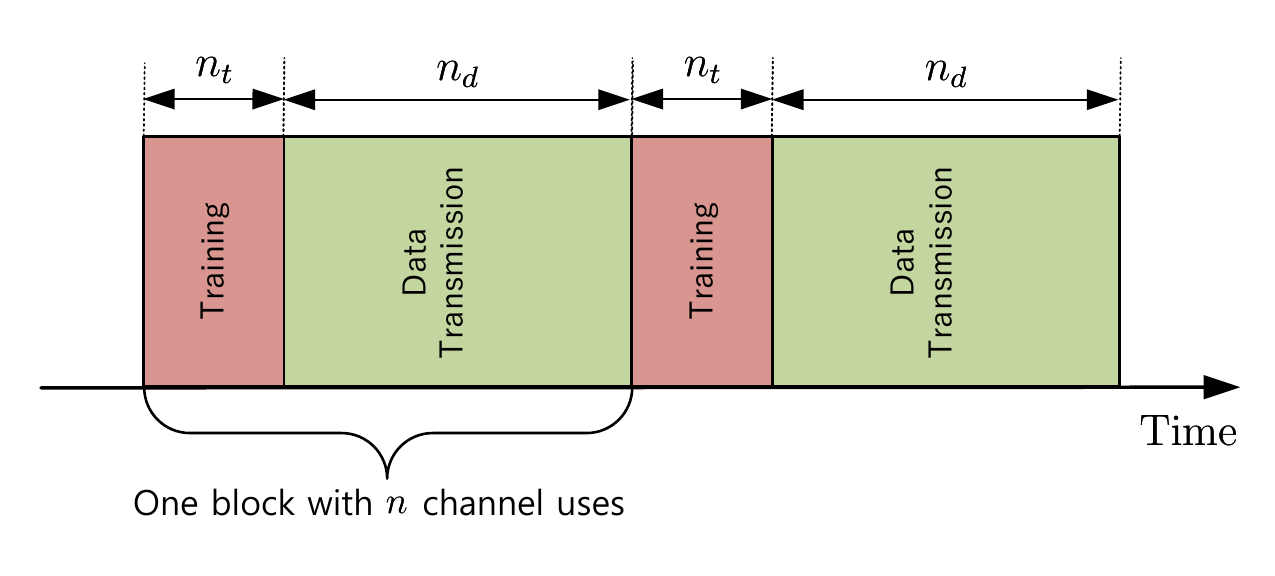}
 \caption{\small{Illustration of the block structure on each sub-channel: the transmitter first sends pilot to the receiver, then the receiver feeds back the fading coefficient to the transmitter (omitted in the figure), finally the transmitter sends data to the receiver according to the estimated CSI }}\label{fig:slotmodel}
\end{centering}
\vspace{-0.3cm}
\end{figure*}

 We consider data transmission from a source to a destination over multiple parallel fading channels using FBL coding as shown in Figure \ref{fig:systemmodel}. The information bits arrive at  a buffer of the transmitter and are kept in the buffer until they are successfully decoded at the receiver. The communication latency is determined by both the transmission delay in the physical layer and the queuing delay in the link layer. The performance metric is characterized by the effective capacity, i.e., the maximum constant arrival rate that can be supported under statistical delay constraints.

\subsection{Physical Layer Model}

Let $m$ denote the total number of parallel fading channels.  Each sub-channel is subject to independent and identically distributed (i.i.d) Rayleigh block  fading with additive white Gaussian noise. The fading coefficients, denoted as $\{{h_i}\}_{i=1}^m$,  remain constant within each block consists of $n \in \mathbb N $ channel uses and change independently from one block to another.  The channel gain is assumed to be normalized with $E[|h_i|^2] =1, \forall i$, and thus each channel coefficient follows the distribution $h_i \sim \mathcal{CN}(0,1)$.  The transmission block structure of each sub-channel  is  illustrated in  Figure~\ref{fig:slotmodel},  where the first $n_t=  \alpha n \in \mathbb N  $ channel uses of each block are used for training and the remaining $n_d= n - n_t$ channel uses are for data transmission. We ignore the CSI feedback part due to the fact that very few bits will be used to feed back the estimated fading coefficient to the transmitter. Furthermore, the feedback is assumed to be error-free and delay-free for simplicity of analysis.

In the training phase, the transmitter sends a pilot sequence of length $ n_t$ on each sub-channel for channel estimation.  The received pilot signal over the $i $-th sub-channel, for $i=1, ..., m$,  can be written as
\begin{equation}
\mathbf y^p_i=\sqrt{n_tp} h_i \mathbf q_i^H+\mathbf n_i,
\end{equation}
where $p$ is  the pilot transmit power at the transmitter, $\mathbf q_i$ is the pilot sequence with $ \mathbf  q_i^H \mathbf q_i=1$  and $\mathbf n_i \sim \mathcal {CN}(0,\sigma^2)$ is the additive white Gaussian noise.

Then the minimum mean square error (MMSE) estimate of the channel coefficient $h_i$ is given by \cite{5466522},\cite{1193803}
 \begin{equation}
\hat h_i=\frac{n_t \gamma_0}{1+n_t p} h_i + \frac{\sqrt{n_t \gamma_0}}{1+n_t \gamma_0} n_i
\end{equation}
where we define the transmit SNR as  $\gamma_0=\frac{p}{\sigma^2}$. Hence,  $\hat h_i$ is aslo Gaussian distributed as $\hat h_i \sim \mathcal {CN}(0,\frac{ n_t \gamma_0}{1+ n_t \gamma_0})$.

The relationship between the MMSE estimate of the channel coefficient $\hat h_i$ and its true value $h_i$ can be given as \cite{5466522}
  \begin{equation}
 h_i=\hat h_i + z_i,
 \end{equation}
 where the channel estimation error  $z_i\sim \mathcal {CN}(0,\frac{1}{1+ n_t\gamma_0})$ is independent of $\hat h_i$.
After the training phase, the receiver feeds back the estimated channel coefficients to the transmitter using an ideal feedback link.

In the data transmission phase, both the receiver and the transmitter treat the estimated CSI as the realistic CSI. Thus the estimation error $z_i$ will be regarded as noise.   The received signal on sub-channel $i$ is given by:
\begin{equation}
y_i=\sqrt{p} h_i s_i+ w_i=\sqrt{p} \hat h_i s_i + \underbrace {\sqrt{p} z_i s_i + w_i}_{v_i},
\end{equation}
where $s_i$ is the  transmitted signal with unit-power on the $i$th sub-channel,  $p$ is the data transmit power on each sub-channel  and  $w_i$ is the   complex Gaussian noise with zero mean and variance $\sigma^2$. Here we have assumed the training phase and data transmission phase have the same transmit power and the same noise power.  According  to the property of MMSE estimation, the estimation error and the estimate channel coefficient $\hat h_i$ are independent.  Thus the second noise term $\{v_i\}$ in (4) can be viewed as independent to the first signal term.

 The instantaneous  received SNR of the $i$th sub-channel can be represented as
\begin{equation}
\hat \gamma_i=  \frac{p|\hat h_i|^2}{\sigma_v^2}=\frac{p|\hat h_i|^2}{\sigma^2 + \frac{p}{1+ n_t \gamma_0}}
= {\frac{ n_t \gamma_0^2}{1+\gamma_0 + n_t \gamma_0}}    x_i=Gx_i,  \label{received-SNR}
\end{equation}
  where $| \hat h _i|^2= \frac{ n_t \gamma_0}{1+ n_t \gamma_0} x_i$, $x_i$ is an exponential distributed random variable with probability density function $f_{\rm PDF}(x_i)=e^{-x_i} $, and $G = {\frac{ n_t \gamma_0^2}{1+\gamma_0 + n_t \gamma_0}} $ can be viewed as the average received SNR taking channel estimation error into account.

{ We assume that channel coding is performed across the $m$ sub-channels, } since these $m$ parallel channels with estimated channel coefficients can be viewed as parallel AWGN channels with different noise variances. According to \cite{5205834}, the achievable coding rate in bits per channel use is given by 
 \begin{equation}
 \begin{aligned}
 R (\{\hat \gamma_i\}_1^m, \varepsilon)=  \frac{1}{m} \sum \limits_{i=1}^{m} \mathrm \log_2 (1+\hat \gamma_i) -\sqrt\frac {\mathcal V(\{\hat \gamma_i\}_1^m) }{ n_d m} Q^{-1}(\varepsilon) +o\left( \frac{\log_2n_d}{m n_d} \right),  \label{fblcoding}
 \end{aligned}
 \end{equation}
 where  $\varepsilon$ is the block decoding error probability, $Q^{-1}(\cdot)$ is  the inverse of the Gaussian Q function
 \begin{equation}
 Q(x)=\frac{1}{\sqrt{2\pi}}\int  _x^\infty e^{-\frac{t^2}{2}}dt,
 \end{equation}
   and  $\mathcal  V$ is  the channel dispersion
 \begin{equation}
 \mathcal V (\{\hat \gamma_i\}_1^m)=  \frac{1}{m}\sum\limits_{i=1}^{m}  \left( 1-\frac {1} {(1+ \hat \gamma_i)^2  } \right)\log_2 ^2 e.
 \end{equation}
 The achievable  rate expression (\ref{fblcoding}) is known as the normal approximation. For simplicity of analysis, we  neglect the third term  and   approximate the channel dispersion term  with an upper bound  
 \begin{equation}
\mathcal V(\{\hat \gamma_i\}_1^m)\approx \log_2^2 e. \label{dispersion}
 \end{equation}
Then, an approximate lower bound of the achievable rate at FBL can be obtained as:
\begin{equation}
 \begin{aligned}
 R (\{\hat \gamma_i\}_1^m, \varepsilon)=  \frac{1}{m} \sum \limits_{i=1}^{m} \mathrm \log_2 (1+\hat \gamma_i) -\frac {\log_2 e }{\sqrt{ n_d m}} Q^{-1}(\varepsilon).  \label{Normal-approx}
 \end{aligned}
 \end{equation}
 This approximation is shown to be accurate at high SNR ($\geq 3$dB) in \cite{Schiessl2015}, especially when multiple parallel channels are considered.
\subsection{Link Layer Model and Effective Capacity with Imperfect CSI at FBL}

We assume that the transmitter applies a simple automatic repeat request (ARQ) mechanism as in \cite{QiaoAug.2019}. Namely, when each transmission is over, the receiver can reliably detect the error transmission and then send  a negative acknowledgement requesting for retransmission in case of transmission error. This feedback link is also assumed to be error-free and delay-free. When error occurs, the service rate of a specific transmission block  is zero. Therefore, the service rate  at the $j$th transmission block (in bits per block) can be expressed as
\begin{equation}
\begin{aligned}
s(j)=
\begin {cases}
0, &\quad  \mathrm {probability} \quad \varepsilon  \\
m n_d  R( \{\hat \gamma_i\}_1^m  ,  \varepsilon),& \quad \mathrm {probability} \quad 1-\varepsilon
\end{cases}.
\end{aligned}
\end{equation}
The sequence $\{s(j), j=1, 2, ...\}$ is a discrete-time stationary and ergodic stochastic service process.

Effective capacity is the maximum constant arrival rate that a given service process can support with statistical QoS guarantee specified by the QoS exponent $\theta$ \cite{1210731}.
It can be calculated as \cite{1210731}
\begin{equation}
C_E(\theta)=-\lim_{t\to \infty} \frac{1}{\theta t} \ln \mathbb E \{e^{-\theta \mathbf S(1,t)} \}  \rm  \; bits/block, \label{EC}
\end{equation}
where $\mathbf S(1,t)=\sum \limits_{i=1}^{t} s(i)$  is the accumulated amount of service bits up to the $t$-th transmission block.

Given the effective capacity $C_E(\theta) =\mu$, the probability that the  transmission delay of the arrived information bits in slot  $t$, denoted as $D(t)$,  exceeds a target delay bound $D_{\rm max}$, i.e., delay violation probability, can be expressed approximately as \cite{1210731}
\begin{equation}
\mathbb P(D(t)>D_{\rm max})\approx \eta e^{-\theta \mu D_{\rm max}},  \label{dvp}
\end{equation}
where $\eta$ is the buffer non-empty probability and the QoS exponent $\theta$ indicates the decaying speed.  When the effective capacity and the delay bound are given, for larger $\theta$, the delay violation probability decays faster. Therefore, throughput performance with different statistical QoS guarantee (i.e., delay violation probability) can be characterized appropriately by the effective capacity with different decaying exponent $\theta$.

By definition (\ref{EC}), the effective capacity  of $m$ parallel channels at FBL can be given by
\begin{align}
C_{E}(\theta)& = -\lim \limits_{t \to \infty}\frac{1}{\theta t} \ln \mathbb E\left\{ e^{-\theta \mathbf  S(1, t)} \right\} \\
& = -\lim \limits_{t \to \infty}\frac{1}{\theta t} \ln (\mathbb E\{ e^{-\theta s(j)} \})^t  \label{service}\\
&= -\frac{1}{\theta } \ln \mathbb E\left \{ e^{-\theta s(j)} \right\} \\
& = -\frac{1}{\theta } \ln \mathbb E \left\{\varepsilon +(1-\varepsilon)e^{-\theta m n_d  R (\{\hat \gamma_i\}_1^m, \varepsilon) } \right\}, \label{EC-fbl}
\end{align}
where the expectation is with respect to $\hat \gamma=[\hat \gamma_1, \hat \gamma_2,...,\hat \gamma_m]$. Note that (\ref{service}) follows from the fact that the services process $s(j)$ changes independently from one block to another.
By substituting (\ref{Normal-approx})  into (\ref{EC-fbl}), effective capacity is obtained as shown in (\ref{18})-(\ref{EC-mc-fbl}).
\begin{align}
C_E(\theta)&= -\frac{1}{\theta } \ln \mathbb E \left\{ \varepsilon +(1-\varepsilon)e^{-\theta m n_d \left\{ \frac{1}{m} \sum \limits_{i=1}^{m} \mathrm \log_2(1+\hat\gamma_i) -\sqrt {\frac {\log_2^2e}{m n_d} }Q^{-1}(\varepsilon) \right\}  } \right\}  \label{18}\\
& =  -\frac{1}{\theta  } \ln   \left\{\varepsilon +(1-\varepsilon)e^{\theta\sqrt {mn_d} Q^{-1}(\varepsilon) \log_2 e}  \mathbb E \left[ \prod \limits_{i=1}^{m} (1+\hat \gamma_i)^{-\theta n_d \log_2e} \right]   \right\} \\
&=-\frac{1}{\theta  } \ln   \left\{\varepsilon +(1-\varepsilon)  \mathbb E \left[ e^{-\theta n_d\log_2 (1+\hat \gamma_i)+\theta n_d\sqrt {\frac{\log_2^2e}{mn_d}} Q^{-1}(\varepsilon) }\right] ^m   \right\}. \label{EC-mc-fbl}
\end{align}
\vspace{-0.7cm}
\newcounter{mytempeqncnt}

 {\subsection{{Key system parameters}}}

The main focus of this paper is to investigate the impact of the key system parameters, including the pilot length $n_t$, decoding error probability $\varepsilon$, transmit power $p$ and the sub-channel number $m$ on the effective capacity. They are elaborated as follows:

\textbf{Impact of pilot length  $n_t$}: when using a long training sequence of $n_t$ symbols, the channel estimation becomes more accurate, allowing transmissions with higher reliability but leaving fewer symbols ($n_d=n- n_{t}$) for the data transmission. While when using a short training sequence, the channel estimation becomes inaccurate while the channel uses for the data transmission increases. Thus, the parameter $n_t$ should be chosen carefully. This is particularly the case for FBL transmission since  the blocklength reduction of the data transmission part deteriorates the communication performance more rapidly due to the second order penalty term in (\ref{fblcoding}).

\textbf{Impact of the decoding error probability $\varepsilon$}: when adopting a larger decoding error probability, the transmission rate is larger while more retransmissions will be needed; when adopting a smaller decoding error probability, the channel coding rate decreases while the number of retransmission will be small. Finding the optimal decoding error probability that maximizes the system throughput is quite important.

\textbf{Impact of the transmit power $p$ (i.e., transmit SNR)}:  The throughput increases as the transmit power increases by no means. But how does the throughput increase, especially in the FBL regime is still under investigated.  Answering this question may help us design more practical power control scheme for URLLC systems.

\textbf{Impact of the number of sub-channels $m$}: Similar to the transmit power, how does the throughput increase as the number of sub-channels increases? This impact can provide valuable insights for subcarrier allocation in the OFDM system.

\section{Performance Analysis}
In this section,  we first show the impact of the pilot length and the decoding error probability on the effective capacity and identify the optimal tradeoffs. Then the impact of  the transmit power as well as the  sub-channel number on  the effective capacity  are addressed.   We then derive a closed-form lower bound of the effective capacity. Finally, we propose an alternating optimization based algorithm to maximize the effective capacity through optimizing the pilot length and the decoding error probability iteratively.
\subsection{ Impact of the  Considered Parameters on Effective Capacity}
 In this subsection, we will investigate the impact of  the pilot length $n_t$, transmit power $p$, sub-channel number $m$ and decoding error probability $\varepsilon$ on the effective capacity.
 Firstly, let us consider the impact of the decoding error probability $\varepsilon$. Denote the inner function  in (18) with respect to $\varepsilon$ as shown in (\ref{innerfunc}).
 \begin{align}
T(\varepsilon)= \mathbb E \left\{ \varepsilon +(1-\varepsilon)e^{-\theta m n_d \left\{ \frac{1}{m} \sum \limits_{i=1}^{m} \mathrm \log_2(1+\hat\gamma_i) -\sqrt {\frac {\log_2^2e}{m n_d} }Q^{-1}(\varepsilon) \right\}  } \right\}. \label{innerfunc}
\end{align}
 \newtheorem{mythm}{Theorem}
 \newtheorem{mylem}{Lemma}
 \newtheorem{proof}{Proof}
 \begin{mythm}
  For given $\theta$, $m$, $n_t$ and $p$, $T(\varepsilon)$ is strictly convex in $\varepsilon$. \label{Theorem1}
\end{mythm}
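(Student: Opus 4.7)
The plan is to exploit the fact that the randomness inside the expectation does not involve $\varepsilon$, reducing the claim to a scalar convexity check that can then be handled by direct differentiation.

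First, I would pull the $\varepsilon$-dependent factors out of the expectation in (\ref{innerfunc}). Since $\hat\gamma_1,\dots,\hat\gamma_m$ do not depend on $\varepsilon$, linearity of expectation gives
\[
T(\varepsilon) \;=\; \varepsilon + (1-\varepsilon)\, K\, e^{c\, Q^{-1}(\varepsilon)},
\]
where $c := \theta\sqrt{m n_d}\,\log_2 e > 0$ and $K := \mathbb E\bigl[\prod_{i=1}^{m} (1+\hat\gamma_i)^{-\theta n_d\log_2 e}\bigr] > 0$ is a positive constant depending only on $\theta, m, n_t, p$. Because the affine term $\varepsilon$ contributes nothing to $T''$, strict convexity of $T$ on $(0,1)$ reduces to showing $h''(\varepsilon) > 0$, where $h(\varepsilon):=(1-\varepsilon)\,e^{c Q^{-1}(\varepsilon)}$.

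Next, I would change variables to $u = Q^{-1}(\varepsilon)$, so $du/d\varepsilon = -1/\phi(u)$, where $\phi$ denotes the standard normal PDF and $\Phi$ its CDF. Using $\phi'(u) = -u\phi(u)$ together with the identity $\frac{d}{du}\bigl[\Phi(u)/\phi(u)\bigr] = 1 + u\,\Phi(u)/\phi(u)$ (obtained by the quotient rule), I expect the second derivative to simplify to
\[
h''(\varepsilon) \;=\; \frac{c\, e^{c u}}{\phi(u)}\,\bigl[\, 2 + (c+u)\, r(u) \,\bigr], \qquad r(u):=\Phi(u)/\phi(u) > 0.
\]
The prefactor $c\, e^{cu}/\phi(u)$ is strictly positive, so the sign of $h''$ is governed entirely by the bracket.

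The main obstacle is proving $2 + (c+u)\,r(u) > 0$ for every $u\in\mathbb R$, since $(c+u)r(u)$ turns negative when $u<-c$. I would handle this by a case split. For $u\geq -c$ the bracket is at least $2$ and positivity is immediate. For $u<-c$ (hence $u<0$), I would invoke the standard Mill's ratio bound $\Phi(u)/\phi(u) \leq 1/|u|$ for $u<0$ (equivalently $Q(x)\leq \phi(x)/x$ for $x>0$, which follows from $\int_x^\infty \phi(t)\,dt \leq \int_x^\infty (t/x)\phi(t)\,dt = \phi(x)/x$) to obtain
\[
(c+u)\,r(u) \;=\; -(|u|-c)\,r(u) \;\geq\; -\frac{|u|-c}{|u|} \;>\; -1,
\]
so the bracket still exceeds $1$. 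Combining both cases yields $h''(\varepsilon) > 0$ on $(0,1)$, which establishes strict convexity of $T$.
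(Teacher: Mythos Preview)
Your argument is correct. The decomposition $T(\varepsilon)=\varepsilon+(1-\varepsilon)K e^{cQ^{-1}(\varepsilon)}$ matches equation~(\ref{EC-indep}) in the paper, and your second-derivative computation is accurate: with $u=Q^{-1}(\varepsilon)$ and $r(u)=\Phi(u)/\phi(u)$ one indeed obtains $h''(\varepsilon)=\frac{c\,e^{cu}}{\phi(u)}\bigl[2+(c+u)r(u)\bigr]$, and your case split together with the Mill's-ratio bound $\Phi(u)/\phi(u)\le 1/|u|$ for $u<0$ cleanly disposes of the sign question.

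The paper itself does not give a self-contained argument here; it simply invokes Theorem~1 of \cite{QiaoAug.2019}, noting that the only difference is that coding is performed across $m$ parallel sub-channels rather than across $m$ coherence blocks. Your approach therefore supplies the details that the paper defers to a reference. The substance is presumably the same---the structure $\varepsilon+(1-\varepsilon)K e^{cQ^{-1}(\varepsilon)}$ is identical in both settings once the expectation is absorbed into $K$---but your write-up has the advantage of being fully self-contained and making explicit exactly where the Gaussian tail bound enters.
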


\begin{proof}
 This theorem follows directly Theorem 1 in \cite{QiaoAug.2019}, with the only difference that our channel coding is performed over multiple sub-channels and while it is over multiple coherence blocks in \cite{QiaoAug.2019}.
\end{proof}
 \newtheorem{mar}{Remark}

Theorem \ref{Theorem1} indicates that there exists an optimal decoding error probability that maximizes the effective capacity when the other parameters are given. This result reveals the optimal tradeoff between the decoding error probability (retransmissions) and the channel coding rate in terms of effective capacity.

Then we will consider the impact of the pilot length. Note that we shall remove the integer constraint of $n_t$  in the following analysis.
 \begin{mythm}
 For given $\theta$, $m$, $\varepsilon$ and $ \gamma_0$,  when $\alpha=\frac{n_t}{n} \in (0, 0.2)$, the effective capacity $C_E(\theta)$ is concave in $n_t$. \label{Thm2}
\end{mythm}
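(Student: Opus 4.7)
The plan is to reduce concavity of $C_E(\theta)$ in $n_t$ to log-convexity of the inner expectation. Writing $C_E(\theta)=-\frac{1}{\theta}\ln f(n_t)$, where $f(n_t)$ is the expectation inside (\ref{EC-fbl}) and $\theta>0$, concavity of $C_E$ in $n_t$ is equivalent to $(\ln f)''(n_t)\geq 0$, i.e., log-convexity of $f$. Using the factorization from (\ref{18})--(\ref{EC-mc-fbl}) with independence of $\{x_i\}$, I rewrite $f(n_t)=\varepsilon+(1-\varepsilon)J(n_t)$ with $J(n_t)=e^{A(n_t)}[E(n_t)]^{m}$, where $A(n_t)=\theta\sqrt{m(n-n_t)}\,Q^{-1}(\varepsilon)\log_2 e$ and $E(n_t)=\int_0^\infty(1+Gx)^{-\beta}e^{-x}\,dx$ with $\beta=\theta(n-n_t)\log_2 e$ and $G=G(n_t)$ as in (\ref{received-SNR}). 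Since log-convexity is preserved under addition (a H\"older consequence) and the constant $\varepsilon$ is trivially log-convex, it suffices to prove $J(n_t)$ is log-convex, i.e., $A''(n_t)+m(\ln E)''(n_t)\geq 0$.

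A direct calculation gives $A''(n_t)=-\theta\sqrt{m}\,Q^{-1}(\varepsilon)(\log_2 e)/[4(n-n_t)^{3/2}]<0$ under the URLLC assumption $\varepsilon<1/2$, so the burden falls entirely on $m(\ln E)''(n_t)$ being positive and large enough to dominate $|A''|$. To make $E(n_t)$ tractable I would apply the change of variable $u=1+Gx$ to obtain the closed form $E(n_t)=G^{-\beta}e^{1/G}\Gamma(1-\beta,1/G)$, the incomplete-Gamma / exponential-integral representation that the paper's abstract already promises. Then $\ln E$ depends on $n_t$ only through $G(n_t)$ and $\beta(n_t)$, whose derivatives are simple ($G'>0$, $G''<0$, $\beta'=-\theta\log_2 e$, $\beta''=0$), so $(\ln E)''$ splits into $(\partial_G\ln E)G''+(\partial_{GG}\ln E)(G')^2+2(\partial_{G\beta}\ln E)G'\beta'+(\partial_{\beta\beta}\ln E)(\beta')^2$.

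I would then group these contributions and argue that the positive term $(\partial_{GG}\ln E)(G')^2$, which captures the improvement in channel estimation with additional pilots, dominates. This is precisely where the hypothesis $\alpha\in(0,0.2)$ enters: in this regime $G(n_t)$ is still far from its saturation value $\gamma_0$, so $G'(n_t)$ remains sizable and the positive convex contribution outweighs both $A''$ and the mixed terms. The main obstacle is establishing the explicit inequality $m(\ln E)''(n_t)\geq \theta\sqrt{m}\,Q^{-1}(\varepsilon)(\log_2 e)/[4(n-n_t)^{3/2}]$ uniformly for $\alpha\in(0,0.2)$; because $(\ln E)''$ involves moments of $(1+Gx)^{-\beta}$ weighted by $x^k\log^\ell(1+Gx)$ that do not reduce to elementary functions, a clean analytical proof most likely hinges on monotonicity lemmas for $\Gamma(1-\beta,1/G)$. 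As a practical fallback I would combine a Taylor-expansion argument in $n_t\gamma_0$ for small $n_t$ with a numerical verification at the endpoint $\alpha=0.2$, using continuity of $(\ln f)''(n_t)$ to close the interval.
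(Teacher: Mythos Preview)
Your setup is fine: reducing concavity of $C_E(\theta)$ to log-convexity of $f(n_t)=\varepsilon+(1-\varepsilon)J(n_t)$, invoking closure of log-convexity under addition, and then splitting $\ln J=A+m\ln E$ is exactly the right decomposition. The substantive gap is in how you handle the $A$-term and, relatedly, how you interpret the hypothesis $\alpha\in(0,0.2)$.

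In the paper's proof the constraint $\alpha\in(0,0.2)$ is \emph{not} used to keep $G'(n_t)$ large. It is used to justify the first-order Taylor replacement $\sqrt{1-\alpha}\approx 1-\alpha/2$, i.e., to linearize $\sqrt{n-n_t}$ in $n_t$. After that replacement $A(n_t)$ is affine, so $A''=0$ identically and the ``burden'' you identify disappears. The entire task then reduces to showing $(\ln E)''(n_t)\geq 0$, and the paper does this by a pointwise argument rather than special functions: for each fixed realization $x_i$, one checks directly that $\Psi(n_t)=\log_2(1+G(n_t)x_i)$ is concave in $n_t$ (two derivatives of a scalar function), hence $\Phi(n_t)=(n-n_t)\Psi(n_t)$ is concave, hence $e^{-\theta\Phi(n_t)}$ is log-convex in $n_t$ for every $x_i$. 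Log-convexity is preserved under mixtures (the same H\"older argument you already invoked), so $E(n_t)=\mathbb{E}_x[e^{-\theta\Phi(n_t)}]$ is log-convex. No incomplete-Gamma asymptotics and no comparison of $(\ln E)''$ against $|A''|$ are needed.

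By keeping the exact $A(n_t)$ you are attempting to prove a strictly stronger statement than the paper does, and that is why you cannot close the argument without resorting to numerics. Your proposed inequality $m(\ln E)''\geq |A''|$ uniformly on $(0,0.2)$ is not established, and your heuristic for why it should hold (that $G$ is far from saturation) is not the mechanism the paper relies on. If you want to align with the paper, linearize $\sqrt{1-\alpha}$ first and then run the pointwise concavity argument for $\Phi$; if you want a genuinely rigorous version without the Taylor approximation, you would need an honest lower bound on $(\ln E)''$ in terms of $n,\theta,\gamma_0$, which neither your proposal nor the paper supplies.
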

\begin{proof}
Please see appendix \ref{Theorem2-proof}.
\end{proof}

Theorem \ref{Thm2} indicates that if the ratio of the pilot length over the entire transmission blocklength is upper bounded by a certain value, there exists a unique pilot length $n_t^*$ that maximizes the effective capacity $C_E(\theta)$. This result identifies the optimal tradeoff  between the pilot length and the payload length in terms of the effective capacity.

 \begin{mythm}
Assume that the received SNR $\hat \gamma_i> -3$dB, $\forall i \in \{1,2,...,m\}$, 
{for given $\theta$, $m$, $\varepsilon$ and $n_t$, the effective capacity $C_E(\theta)$ is concave and monotonically increasing with respect to $p$.}
\end{mythm}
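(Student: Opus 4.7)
The plan is to write the effective capacity as $C_E(\theta) = -\frac{1}{\theta}\ln\mathbb{E}\{h(p)\}$ with $h(p) := \varepsilon + (1-\varepsilon)\exp(-\theta m n_d R(p))$, where $R(p) = \frac{1}{m}\sum_{i=1}^m \log_2(1+\hat\gamma_i) - \frac{\log_2 e}{\sqrt{n_d m}}Q^{-1}(\varepsilon)$ and $\hat\gamma_i = G(\gamma_0)x_i$ with $G(\gamma_0) = \frac{n_t\gamma_0^2}{1+(n_t+1)\gamma_0}$ (from (\ref{received-SNR}) and $\gamma_0 = p/\sigma^2$). Since the normalized exponentials $x_i$ do not depend on $p$, the entire $p$-dependence of the integrand is funneled through $G$, which lets me move all subsequent convexity analysis to this single scalar function.

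For monotonicity I would differentiate $G$ directly to obtain $G'(\gamma_0) = \frac{n_t\gamma_0(2+(n_t+1)\gamma_0)}{(1+(n_t+1)\gamma_0)^2}>0$, so $\hat\gamma_i$, and hence $\log_2(1+\hat\gamma_i)$, is strictly increasing in $p$ for every realization of $x_i$. Consequently $R(p)$ is pathwise increasing, $h(p)$ is pathwise decreasing, $\mathbb{E}\{h\}$ is decreasing in $p$, and because $\theta>0$ we conclude $\partial C_E/\partial p > 0$.

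For concavity I would invoke the standard preservation property that log-convexity in a parameter is closed under nonnegative sums and under integration (the latter via H\"older's inequality). Thus if each sample path $h(p,\omega)$ is log-convex in $p$, then $\mathbb{E}\{h\}$ is log-convex, and therefore $-\frac{1}{\theta}\ln\mathbb{E}\{h\}$ is concave. Since the constant $\varepsilon$ is trivially log-convex, the task reduces to showing $(1-\varepsilon)\exp(-\theta m n_d R(p))$ is log-convex, i.e., $R(p)$ is concave in $p$, which in turn reduces to showing that each $\phi_i(\gamma_0) := \log(1+G(\gamma_0)x_i)$ is concave in $\gamma_0$.

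The main obstacle is establishing this pathwise concavity, and pinpointing where the $-3$\,dB hypothesis enters. Computing $G'' = \frac{2n_t}{(1+(n_t+1)\gamma_0)^3}$ and then $\phi_i'' = \frac{x_i[\,G''(1+Gx_i) - x_i(G')^2\,]}{(1+Gx_i)^2}$, substituting $\hat\gamma_i = Gx_i$ and clearing denominators, the condition $\phi_i''\le 0$ collapses to $\hat\gamma_i \ge \frac{2}{(2+(n_t+1)\gamma_0)^2-2}$. The right-hand side is bounded above by $1/2$ once $(n_t+1)\gamma_0 \ge \sqrt{6}-2 \approx 0.449$, so the assumption $\hat\gamma_i > -3$\,dB (i.e., $\hat\gamma_i > 1/2$) is sufficient to force $\phi_i''\le 0$ throughout the operating regime of interest. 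With $R$ thereby concave in $p$, the log-convexity chain above closes and $C_E$ is concave in $p$, completing the proof.
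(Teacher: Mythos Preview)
Your argument is correct and shares the same core step with the paper: both reduce concavity of $C_E$ in $p$ to the pathwise concavity of $\log(1+\hat\gamma_i)$ in $\gamma_0$, verified by a second-derivative computation that is controlled precisely by the $-3$\,dB assumption on $\hat\gamma_i$. Your derivative calculation is in fact slightly more careful than the paper's; the paper's bracket in (\ref{55}) drops a factor $1/(1+\hat\gamma_i)$, whereas your condition $\hat\gamma_i \geq 2/[(2+(n_t+1)\gamma_0)^2-2]$ is the exact one, and you correctly flag the mild extra requirement $(n_t+1)\gamma_0 \gtrsim 0.45$ for the $-3$\,dB bound to suffice.

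Where the two proofs genuinely differ is in the \emph{lifting step} from pathwise concavity of $R(p)$ to concavity of $C_E(\theta)$. The paper proceeds in two stages: it first invokes a discrete log-sum-exp lemma (Lemma~1) to pass from convexity of $e^{-\theta n_d\Psi}$ to convexity of $\ln\mathbb{E}[e^{-\theta n_d\Psi}]$, and then performs a separate explicit second-derivative computation (the end of Appendix~\ref{Theorem2-proof}) to show that adding the constant $\varepsilon$ inside the logarithm preserves concavity. Your route is cleaner and more general: you observe that log-convexity in $p$ is closed under nonnegative addition and under integration (H\"older), so once each sample path $h(p,\omega)=\varepsilon+(1-\varepsilon)e^{-\theta m n_d R(p,\omega)}$ is log-convex, $\mathbb{E}\{h\}$ is automatically log-convex and $-\frac{1}{\theta}\ln\mathbb{E}\{h\}$ is concave. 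This handles both the expectation and the $\varepsilon$ offset in a single stroke, avoids the somewhat informal passage from the finite-sum Lemma~1 to a continuous expectation, and would extend without change to non-i.i.d.\ sub-channels.
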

\begin{proof}
Please see appendix \ref{Theorem3}.
\end{proof}

 This result indicates that under the given minimum received SNR assumption, as the transmit power increases, the effective capacity increases but in a diminishing manner. When the transmit power is large enough,   the inner term of (\ref{innerfunc}) is bounded by $\varepsilon$, hence the effective capacity is bounded by $-\frac{1}{\theta}\ln \varepsilon$. In this case and the decoding error probability becomes the main factor which affects the effective capacity.

 \begin{mythm}
 For given $\theta$, $p$, $n_t$ and $\varepsilon$, the effective capacity $C_E(\theta)$ is monotonically increasing with respect to $m$.
\end{mythm}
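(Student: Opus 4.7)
The plan is to start from the effective-capacity expression (\ref{EC-mc-fbl}). Because the sub-channel gains $\hat{\gamma}_1, \ldots, \hat{\gamma}_m$ are i.i.d.\ Rayleigh-block-fading random variables with a common distribution depending only on $n_t$ and $p$, the product inside the expectation in (\ref{EC-mc-fbl}) factorizes and one obtains
\[
C_E(\theta) \;=\; -\frac{1}{\theta}\ln\!\Bigl[\,\varepsilon + (1-\varepsilon)\,a^{m}\,e^{c\sqrt{m}}\,\Bigr],
\]
where $a := \mathbb{E}\!\left[(1+\hat{\gamma}_1)^{-\theta n_d \log_2 e}\right]\in(0,1)$ (strict, since $\hat{\gamma}_1>0$ almost surely) and $c := \theta\sqrt{n_d}\log_2 e\cdot Q^{-1}(\varepsilon)>0$ (positive because $\varepsilon$ is small in the URLLC regime). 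The key observation is that neither $a$ nor $c$ depends on $m$; all $m$-dependence is concentrated in the scalar exponent $\psi(m) := m\ln a + c\sqrt{m}$.

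Since the outer map $u\mapsto -\tfrac{1}{\theta}\ln[\varepsilon+(1-\varepsilon)u]$ is strictly decreasing for $u>0$, the claim ``$C_E(\theta)$ is monotonically increasing in $m$'' is equivalent to ``$\phi(m):=e^{\psi(m)}$ is strictly decreasing in $m$''. Treating $m$ as a positive real for the analysis, $\psi'(m)=\ln a + c/(2\sqrt{m})$ and $\psi''(m)=-c/(4m^{3/2})<0$, so $\psi$ is strictly concave and $\psi'$ is itself monotonically decreasing in $m$. Consequently, a single inequality at $m=1$ suffices: if $\psi'(1)=\ln a + c/2\leq 0$, i.e.\ $a\leq e^{-c/2}$, then $\psi'(m)<0$ for all $m\geq 1$ and the desired monotonicity follows. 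This reduces the proof of Theorem~4 to verifying one scalar inequality.

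The main obstacle is establishing $a\leq e^{-c/2}$ under the paper's operating assumptions. My plan is to exploit the Rayleigh hypothesis: from (\ref{received-SNR}), $\hat{\gamma}_1 \sim \mathrm{Exp}(1/G)$, so $a$ admits an explicit exponential-integral / incomplete-gamma representation (the same one underlying the paper's Section~III main result). From this, an explicit upper bound on $a$ can be derived, for instance via the elementary inequality $(1+Gx)^{-\beta}\leq e^{-\beta Gx/(1+Gx)}$ combined with a tail truncation of the exponential density, or alternatively by importing the paper's own closed-form lower bound on $C_E$, which is equivalent to an explicit upper bound on $a$. Heuristically $|\ln a|$ scales like $\theta n_d\log_2 e\cdot\ln(1+G)$ whereas $c/2$ scales only like $\theta\sqrt{n_d}\log_2 e\cdot Q^{-1}(\varepsilon)/2$, so the gap widens for any practical $n_d$; the delicate aspect is the small-$n_d$ corner where the two quantities become comparable. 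Even in that corner, concavity of $\psi$ still guarantees monotonicity on $m\geq \lceil c^2/(4\ln^2 a)\rceil$, so the theorem's conclusion is at worst a boundary statement that holds comfortably in the URLLC parameter regime of interest.
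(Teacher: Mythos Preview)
Your decomposition is exactly the one the paper uses. In Appendix~C the paper works with the same auxiliary function (your $\psi$, their $L=-\psi/\theta$), differentiates twice in $m$, observes that only the $\sqrt{m}$ term survives the second derivative, and then concludes monotonicity. So the route is the same; the difference is that you are more careful about what the second-derivative computation actually buys.

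In fact you have spotted a genuine gap that the paper glosses over. The paper writes $\partial^2 L/\partial m^2=-\tfrac14\sqrt{n_d/m^3}\,Q^{-1}(\varepsilon)\log_2 e<0$ and declares $L$ ``concave and monotonically decreasing'' in $m$; but as your computation shows, $\psi''(m)=-c/(4m^{3/2})<0$ implies $L''=-\psi''/\theta>0$, so the paper has a sign slip, and in any case neither convexity nor concavity of $L$ by itself yields monotonicity. The missing ingredient is precisely your boundary inequality $\psi'(1)=\ln a+c/2\le 0$, which the paper never states or checks. Your heuristic that $|\ln a|\sim\theta n_d\log_2 e\cdot\ln(1+G)$ dominates $c/2\sim\tfrac{\theta}{2}\sqrt{n_d}\log_2 e\,Q^{-1}(\varepsilon)$ is the right order-of-magnitude argument in the URLLC regime, but note that the theorem as stated (arbitrary fixed $\varepsilon$) cannot be literally true: $a$ is independent of $\varepsilon$ while $c\to\infty$ as $\varepsilon\downarrow 0$, so $\psi'(1)>0$ for sufficiently small $\varepsilon$. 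Hence some parameter restriction is unavoidable, and your observation that monotonicity is only guaranteed for $m\ge\lceil c^2/(4\ln^2 a)\rceil$ is the honest statement. In short, your proposal matches the paper's approach, is more accurate on the signs, and correctly isolates a condition that both proofs ultimately leave unverified.
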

\begin{proof}
Please see appendix \ref{Theorem4}.
 \end{proof}

 {Similarly, there is also an obvious upper bound of the effective capacity $-\frac{1}{\theta} \ln \varepsilon$. This result indicates, as the sub-channel number increases, the effective capacity also increases. When the  sub-channel number is large enough, the effective capacity remains constant and the decoding error probability becomes the main factor which affects the effective capacity.}

\subsection{Closed-form Approximation of the Effective Capacity}
{The aforementioned analysis indicates the existence of the impact of the parameters  on the effective capacity. How to obtain the optimal parameters is a non-trivial   problem,  because the effective capacity in (\ref{EC-mc-fbl}) is still in an integral form.} In this subsection, we will provide a closed-form lower bound  expression of the effective capacity  and verify the corresponding properties.

The effective capacity can be transformed as shown in (\ref{EC-def})-(\ref{EC-expint}),
\begin{align}
C_E(\theta)&= -\frac{1}{\theta } \ln \mathbb E \left\{ \varepsilon +(1-\varepsilon)e^{-\theta m n_d \left\{ \frac{1}{m} \sum \limits_{i=1}^{m} \mathrm \log_2(1+\hat\gamma_i) -\sqrt {\frac {\log_2^2e}{m n_d} }Q^{-1}(\varepsilon) \right\}  } \right\}   \label{EC-def} \\
&= -\frac{1}{\theta } \ln  \left\{\varepsilon + (1-\varepsilon)e^{\theta'\sqrt {mn_d} Q^{-1}(\varepsilon) } \left\{ \mathbb  E\left[ (1+ G x_i)^{-\theta' n_d } \right] \right\} ^{m}  \right\} \label{EC-indep} \\
&=  -\frac{1}{\theta }\ln  \left\{\varepsilon +(1-\varepsilon)e^{\theta'\sqrt {mn_d} Q^{-1}(\varepsilon)}  \left[ \int_0^\infty (1+ x_i)^{-\theta' n_d } e^{-x} dx_i \right] ^m\right\} \\
&= -\frac{1}{\theta } \ln  \left\{\varepsilon + (1-\varepsilon)e^ {\theta' \sqrt {mn_d} Q^{-1}(\varepsilon)}   \left[ \frac{1}{ G } e^{\frac{1}{ G}} E_{\theta' n_d }\left(\frac{1}{ G}\right)\right]^m  \right\}\label{EC-expint},
\end{align}
therein,  $\theta'=\theta \log_2e$  and $E_v(x)$ is the exponential integral given by
 \begin{equation}
 E_v(x)=\int_1^\infty e^{-xt}{t^{-v}}dt.
 \end{equation}
 From (\ref{EC-def}) to (\ref{EC-indep}), we employ the fact that all the sub-channels are i.i.d.
Then, by applying the upper bound of exponential integral in \cite{Chiccoli1990}
\begin{equation}
E_v(x) \leq \frac{e^{-x}}{v+x-1}, v> 1, \label{expint-lb}
\end{equation}
 a closed-form lower bound of the effective capacity is obtained in (\ref{EC-closed}).
\begin{equation}
\begin{aligned}
& \underline  C_E(\theta)=  -\frac{1}{\theta } \ln \left\{\varepsilon + (1-\varepsilon)  e^{\theta'\sqrt {mn_d} Q^{-1}(\varepsilon) } \left[ {(\theta'n_d-1) G+1} \right]^{-m} \right\}. \label{EC-closed}
\end{aligned}
\end{equation}
Note that one should have $\theta' n_d>1$ i.e.,  $ \theta n_d \log_2 e >1$ to make the upper bound (\ref{expint-lb}) valid. Furthermore,  $ \theta n_d \log_2 e >2$ can ensure the upper bound (\ref{expint-lb})'s tightness and in the following, we will mainly focus on this condition. This condition is reasonable because in the URLLC transmission, the QoS exponent $\theta$ is larger than $0.01$ and the blocklength is larger than 200. The closed-form lower bound (\ref{EC-closed}) provides an explicit relationship among the QoS exponent $\theta$, blocklength $n$, decoding error probability $\varepsilon$, average received SNR $ G$, and the sub-channel number $m$.

 It is obvious that the properties of the closed-form lower bound  $\underline  C_E(\theta)$ with respect to $\varepsilon$ and $m$ are the same as  $C_E(\theta)$.
 Next we will verify  the impact of the pilot length $n_t$ and the transmit power $p$ on the effective capacity lower bound $\underline  C_E(\theta)$.
  Firstly, fix $\varepsilon$  and remove it for simplicity. Furthermore, we employ the variable $n \alpha$ instead of the integer $n_t$ for convenience.  Here we assume $\alpha=\frac{n_t}{n}$ is constrained by  $\alpha \in (0,0.2)$. Then we denote $\Gamma(m, \gamma_0, \alpha)$ as shown in (\ref{EC-closed-e-origin})-(\ref{EC-closed-e}). From (\ref{EC-closed-e-origin})-(\ref{EC-closed-e-mid}), we employ the property of logarithmic function and from (\ref{EC-closed-e-mid})-(\ref{EC-closed-e}), we adopt the same approximation (\ref{sqrt-approx}) employed in the proof of Theorem 2 in the Appendix.
\begin{align}
\Gamma(m, \gamma_0, \alpha)&=-\frac{1}{\theta}\ln \left [e^{\theta' \sqrt{mn_d} Q^{-1}(\varepsilon) } \left( (\theta' n_d -1) G +1\right )    ^{-m}\right ] \label{EC-closed-e-origin}\\
&=-\sqrt{mn_d}Q^{-1}(\varepsilon)\log_2 e+\frac{m}{\theta}\ln((\theta'n_d-1) G+1)\label{EC-closed-e-mid}\\
&\approx -\sqrt{mn}Q^{-1}(\varepsilon)\log_2 e(1-\frac{\alpha}{2})+\frac{m}{\theta}\ln((\theta'n(1-\alpha)-1) G+1). \label{EC-closed-e}
\end{align}

\begin{mythm}
 Assume that $\theta n_d \log_2 e >2$, for given $\theta$, $m$, $\varepsilon$ and $p$,  when $\alpha \in (0, 0.2)$, the closed-form lower bound of the effective capacity $\underline  C_E(\theta)$ is concave over $\alpha$.
\end{mythm}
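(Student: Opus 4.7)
The plan is to reduce the concavity of $\underline C_E(\theta)$ in $\alpha$ to concavity of the auxiliary function $\Gamma(m,\gamma_0,\alpha)$ defined in (\ref{EC-closed-e}), and then verify the latter by a direct second-derivative computation. First, I would rewrite the closed-form lower bound as $\underline C_E(\theta) = f\bigl(\Gamma(\alpha)\bigr)$ with $f(y) = -\tfrac{1}{\theta}\ln\bigl(\varepsilon + (1-\varepsilon) e^{-\theta y}\bigr)$. A short calculation yields $f'(y) = (1-\varepsilon) e^{-\theta y}/[\varepsilon + (1-\varepsilon) e^{-\theta y}]$, which is positive and strictly decreasing in $y$, so $f$ is concave and monotonically increasing. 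By the standard composition rule (a concave nondecreasing function of a concave function is concave), it then suffices to prove that $\Gamma(\alpha)$ is concave on $\alpha\in(0,0.2)$.

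The first summand of $\Gamma(\alpha)$ in (\ref{EC-closed-e}) is affine in $\alpha$, so the task reduces to verifying concavity of the log term $\frac{m}{\theta}\ln L(\alpha)$, where $L(\alpha) = \bigl(\theta'n(1-\alpha) - 1\bigr)\,G(\alpha) + 1$ and $G(\alpha) = n\alpha\gamma_0^2/(1+\gamma_0+n\alpha\gamma_0)$. The cleanest route is to show $L$ itself is concave and positive, and then invoke the composition rule with the concave nondecreasing $\ln$ once more. Writing $U(\alpha) = \theta' n(1-\alpha) - 1$ (affine) and $V(\alpha) = G(\alpha)$, one has $L = UV + 1$ and $L'' = 2U'V' + UV''$ since $U'' = 0$. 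Computing $V' = bd/(d+c\alpha)^2$ and $V'' = -2bcd/(d+c\alpha)^3$ with the shorthand $a=\theta'n$, $b=n\gamma_0^2$, $c=n\gamma_0$, $d=1+\gamma_0$, direct simplification gives
\begin{equation*}
L''(\alpha) = -\frac{2bd\bigl[a(d+c) - c\bigr]}{(d+c\alpha)^3}.
\end{equation*}

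It remains to show that $a(d+c)-c > 0$ and that $L > 0$ on the range of interest. The hypothesis $\theta n_d \log_2 e > 2$ yields $a = \theta' n \geq \theta' n_d > 2 > 1$, so $a(d+c) - c = ad + c(a-1) > 0$, giving $L''(\alpha) < 0$. Also $U(\alpha) \geq \theta' n_d - 1 > 0$ and $G(\alpha) \geq 0$, so $L \geq 1 > 0$. Hence $L$ is strictly concave and positive, $\ln L$ is concave, $\Gamma$ is concave, and the outer composition delivers the stated concavity of $\underline C_E(\theta)$ in $\alpha$. The only step requiring a little care is the simplification of $L''$: expanding $L$ directly as a rational function as in (\ref{EC-closed-e-mid}) obscures the cancellations, whereas the product decomposition $L = UV + 1$ with $U$ affine makes the cross-term structure collapse cleanly and exposes the key arithmetic inequality $a(d+c) > c$ that is secured by the hypothesis $\theta n_d \log_2 e > 2$.
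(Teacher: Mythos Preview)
Your proof is correct and follows essentially the same strategy as the paper's Appendix~\ref{proposition1}: both reduce to showing that the inner rational function $L(\alpha)=O(\alpha)=(\theta'n_d-1)G(\alpha)+1$ is positive and concave, then invoke the composition rule with $\ln$ to conclude concavity of $\Gamma(\alpha)$. Your exposition is in fact a bit cleaner in two respects---you handle the outer wrapper $f(y)=-\tfrac{1}{\theta}\ln(\varepsilon+(1-\varepsilon)e^{-\theta y})$ explicitly rather than asserting that the $\varepsilon$-terms ``do not alter convexity,'' and your product decomposition $L=UV+1$ with $U$ affine exposes the key sign condition $a(d+c)>c$ more transparently than the paper's direct quotient-rule computation---but the underlying argument is the same.
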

\begin{proof}
Please see Appendix \ref{proposition1}.
\end{proof}

It is easy to prove that the optimal pilot length $n_t^*$ that maximizes $\Gamma(m, \gamma_0, \alpha)$ also maximizes $\underline  C_E(\theta)$.
Thus, to obtain the optimal pilot length $n_t^*$,  we can directly calculate the root of the first order partial derivative of $\Gamma(m, \gamma_0, \alpha)$ in (\ref{EC-closed-e}) with respect to $\alpha$ and the solution can be found by using binary  search method. Note that $\alpha$ is a continuous  variable and is in the interval (0,0.2),  while $n_t$ is an integer, thus when we find the optimal $\alpha^*$, it necessary to compare the two adjacent integers to obtain the optimal $n_t^*$.
\begin{mythm}
 Assume that the average received SNR $G>-3$dB and $\theta n_d \log_2 e >2$, for given $\theta$, $m$, $\varepsilon$ and $n_t$,  the closed-form lower bound of the effective capacity $\underline  C_E(\theta)$  is concave and monotonically increasing with respect to $p$.
\end{mythm}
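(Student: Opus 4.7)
The plan is to express $\underline C_E(\theta)=g(G(\gamma_0))$ with $\gamma_0=p/\sigma^2$ and
\[
g(G)=-\frac{1}{\theta}\ln\{\varepsilon+B(cG+1)^{-m}\},\qquad G(\gamma_0)=\frac{n_t\gamma_0^2}{1+(1+n_t)\gamma_0},
\]
where $B=(1-\varepsilon)e^{\theta'\sqrt{mn_d}Q^{-1}(\varepsilon)}>0$ and $c=\theta'n_d-1>1$ (by $\theta n_d\log_2 e>2$). Direct differentiation yields three preliminary facts: (i) $G'(\gamma_0)>0$ and $G''(\gamma_0)=2n_t/[1+(1+n_t)\gamma_0]^3>0$, so $G$ is strictly increasing and strictly convex in $\gamma_0$; (ii) after multiplying numerator and denominator by $(cG+1)^{m+1}$, $g'(G)=mBc/\{\theta[\varepsilon(cG+1)^{m+1}+B(cG+1)]\}>0$; and (iii) the denominator in (ii) is increasing in $G$, so $g'(G)$ decreases in $G$ and $g$ is concave in $G$. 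Monotonicity of $\underline C_E$ in $p$ then follows from $d\underline C_E/dp=g'(G)G'(\gamma_0)/\sigma^2>0$ by the chain rule.

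Because $g$ is concave--increasing while $G$ is convex, simple composition does not yield concavity, and I would instead work directly with
\[
\frac{d^2\underline C_E}{dp^2}\propto g''(G)[G'(\gamma_0)]^2+g'(G)G''(\gamma_0),
\]
and show that it is $\le 0$, equivalently $[-g''(G)/g'(G)]\,[G'(\gamma_0)]^2\ge G''(\gamma_0)$. A further differentiation of $g'$ yields
\[
\frac{-g''(G)}{g'(G)}=\frac{c[\varepsilon(m+1)(cG+1)^m+B]}{(cG+1)[\varepsilon(cG+1)^m+B]}\ge\frac{c}{cG+1},
\]
where the last bound uses $m+1\ge 1$ and positivity of all terms. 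Hence it suffices to prove the stronger inequality $c[G'(\gamma_0)]^2\ge(cG+1)G''(\gamma_0)$, which is the log--concavity of $\gamma_0\mapsto cG(\gamma_0)+1$.

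The main obstacle is this last inequality, which fails for small $\gamma_0$ in general and is precisely where both hypotheses are used. Substituting the closed forms of $G,\,G',\,G''$ and using the identity $n_t\gamma_0^2=G[1+(1+n_t)\gamma_0]$ reduces the inequality to the polynomial statement
\[
cG\,[2+(1+n_t)\gamma_0]^2\ge 2(cG+1),
\]
equivalently $[2+(1+n_t)\gamma_0]^2\ge 2+2/(cG)$. The hypothesis $c>1$ together with $G\ge 10^{-0.3}\approx 1/2$ gives $cG>1/2$, so the right--hand side is bounded by $6$. At the same time, $G\ge 1/2$ translates into the quadratic constraint $2n_t\gamma_0^2\ge 1+(1+n_t)\gamma_0$; solving this yields $(1+n_t)\gamma_0\ge 1+\sqrt{3}\approx 2.73$ (the minimum being attained at $n_t=1$), so the left--hand side exceeds $(3+\sqrt{3})^2=12+6\sqrt{3}>22$. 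The desired inequality therefore holds comfortably, which establishes concavity of $\underline C_E(\theta)$ in $p$ and completes the proof.
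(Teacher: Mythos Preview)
Your proof is correct and, at its core, follows the same chain--rule reduction as the paper: both arguments differentiate through the intermediate variable $G$ and arrive at the key scalar inequality
\[
cG\bigl[2+(1+n_t)\gamma_0\bigr]^2 \ge 2(cG+1).
\]
There are, however, two genuine differences worth noting. First, the paper does not work with $\underline C_E(\theta)$ directly but with the auxiliary quantity $\Gamma(m,\gamma_0,\alpha)=\frac{m}{\theta}\ln\bigl((\theta'n_d-1)G+1\bigr)+\text{const}$ and then invokes the same ``$\varepsilon$ does not alter convexity'' device used in the proof of Theorem~2; you instead keep the full expression $g(G)=-\frac{1}{\theta}\ln\{\varepsilon+B(cG+1)^{-m}\}$ and obtain the clean lower bound $-g''/g'\ge c/(cG+1)$, which is a self--contained and slightly more transparent way to absorb the $\varepsilon$ term. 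Second, and more substantively, the final numerical step differs: the paper lower--bounds $(n_t\gamma_0+\gamma_0+2)$ only by $2$ and then relies on $G>-3$\,dB to close the inequality, whereas you extract from $G\ge 1/2$ (together with $n_t\ge 1$) the sharper consequence $(1+n_t)\gamma_0\ge 1+\sqrt{3}$, so that the left side exceeds $(3+\sqrt{3})^2>22$ while the right side is below $6$. Your bound is the one that actually makes the squared (rather than quartic) factor work with a comfortable margin, so this refinement is a real improvement over the paper's endgame.
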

\begin{proof}
Please see Appendix \ref{proposition2}.
\end{proof}

This theorem indicates that the closed-form lower bound has the same  properties as the original function of effective capacity.
 \subsection{Joint Optimization of  Pilot Length and Decoding Error Probability }
In this subsection, our purpose is to maximize the effective capacity at given transmit power $p$ and the sub-channel number $m$ by jointly optimizing the pilot length $n_t$ and the decoding error probability $\varepsilon$. The problem is challenging since the effective capacity is not jointly concave in $\left(n_t, \varepsilon\right)$. However, the effective capacity is concave in both $n_t$ and $\varepsilon$ individually, here we adopt the the alternating optimization method, which can guarantee convergence. We first give the initial value of a decoding error probability. The optimal $n_t$ can be calculated through  the first order derivative of (\ref{EC-closed-e}) with respect to $\alpha$.  Binary search method can find the optimal pilot length $n_t=\lfloor  n \alpha \rfloor$  or $n_t=\lceil n \alpha \rceil$  efficiently. Then we take the derived $n_t$ as constant, the optimal decoding error probability can be updated through the first order derivative or directly obtained by leveraging ternary search method  of (\ref{EC-closed}) with respect to $\varepsilon$. Indeed, the expression of the first order derivative of (\ref{EC-closed}) is very complex, the ternary search method is employed  in this paper. Based on previous analysis, we can obtain the optimal parameters in each iteration. Thus the convergence of the proposed method can be guaranteed.  The overall procedure is outlined in Algorithm 1.

\algdef{SE}[DOWHILE]{Do}{doWhile}{\algorithmicdo}[1]{\algorithmicwhile\ #1}
\begin{algorithm}[h]
\caption{Alternating iterative method for effective capacity maximization }
\begin{algorithmic}[0]
\State { $\mathbf {Input}$: Number of sub-channels $m$, transmit SNR $\gamma_0$;}
\State { $\mathbf {Output}$: Optimal pilot length $n_t$, optimal decoding error probability $\varepsilon$;}
\State { Initialize $\varepsilon(0)=10^{-3}$, $i=1$. }
\Do
\State{Calculate the optimal pilot length  $n_t(i)$ that maximizes  (\ref{EC-closed-e}) at given $\varepsilon(0)$, then calculate the optimized effective capacity $C_{E_e}(i)$ by (\ref{EC-closed});}
\State{Calculate the optimal decoding error probability $\varepsilon(i)$ that maximizes (\ref{EC-closed}) at given pilot length $n_t(i)$, then calculate the optimized effective capacity $C_{E_{n_t}}(i)$ by (\ref{EC-closed});}
\State{$i=i+1$;}
\doWhile {$C_{E_{n_t}}(i)-C_{E_e}(i)>10^{-4}$ }\\
\Return{$\varepsilon(i), n_t(i)$}.
\label{code:recentEnd}
\end{algorithmic}
\end{algorithm}

\section{Numerical Results}
In this section,  we evaluate our derived  effective capacity and the closed-form lower bound under different conditions.
 Due to the accuracy of the channel dispersion approximation in (\ref{dispersion}), we mainly focus on the medium and high SNR scenarios. We set the noise power $\sigma^2=1$  for simplicity and thus the transmit SNR is actually the value of transmit power $p$.
 Note that the low latency communication is considered, the QoS requirement of each user is quite stringent, we thus set the QoS exponent as 0.01 and the number of channel uses in each block is 300 unless otherwise addressed. Hence the approximation in (\ref{expint-lb}) is tight under this condition $\theta n_d\log_2 e > 2$.
 To elaborate this further, let us assume the target delay bound is $D_{max} =5$ and the effective capacity is $\mu = 300$. Then, according to (\ref{dvp}), we have the delay violation probability upper bounded by $P(D>D_{max})=e^{\theta \mu  D_{max}} \approx 4.5\times 10^{-5} $, which is small enough to be considered as ultra-reliable.

To validate the analysis of effective capacity, we present three curves with the same parameters generated respectively by the following three methods: analytical results computed directly from the closed-form lower bound (28), Monte-Carlo simulation based on the definition in (22), and analytical results computed from the exponential integral form (25). The purpose is two-fold. The first is to validate the correctness of the theoretical derivation (25) using simulation based on (22). The detailed Monte-Carlo simulation shall be introduced in due course. The second is to verify the tightness of the lower bound (28) by comparing with (25) or (22).

 In \ref{4A}, we  present numerical results to validate the theoretical  analysis.
 Then in Sec. \ref{4B}, we present the optimized effective capacity versus the  sub-channel number and the transmit SNR based on our proposed alternating iterative method, where the corresponding  optimal pilot length and optimal decoding error probability are also  presented.
 \subsection{ Theoretical Analysis and Approximation Accuracy Validation} \label{4A}
\begin{figure}[t]
\centering
\subfigure[Effective capacity v.s. $n_t$ ]
{\label{Fig:EC-nt}  \includegraphics[width=0.65\textwidth]{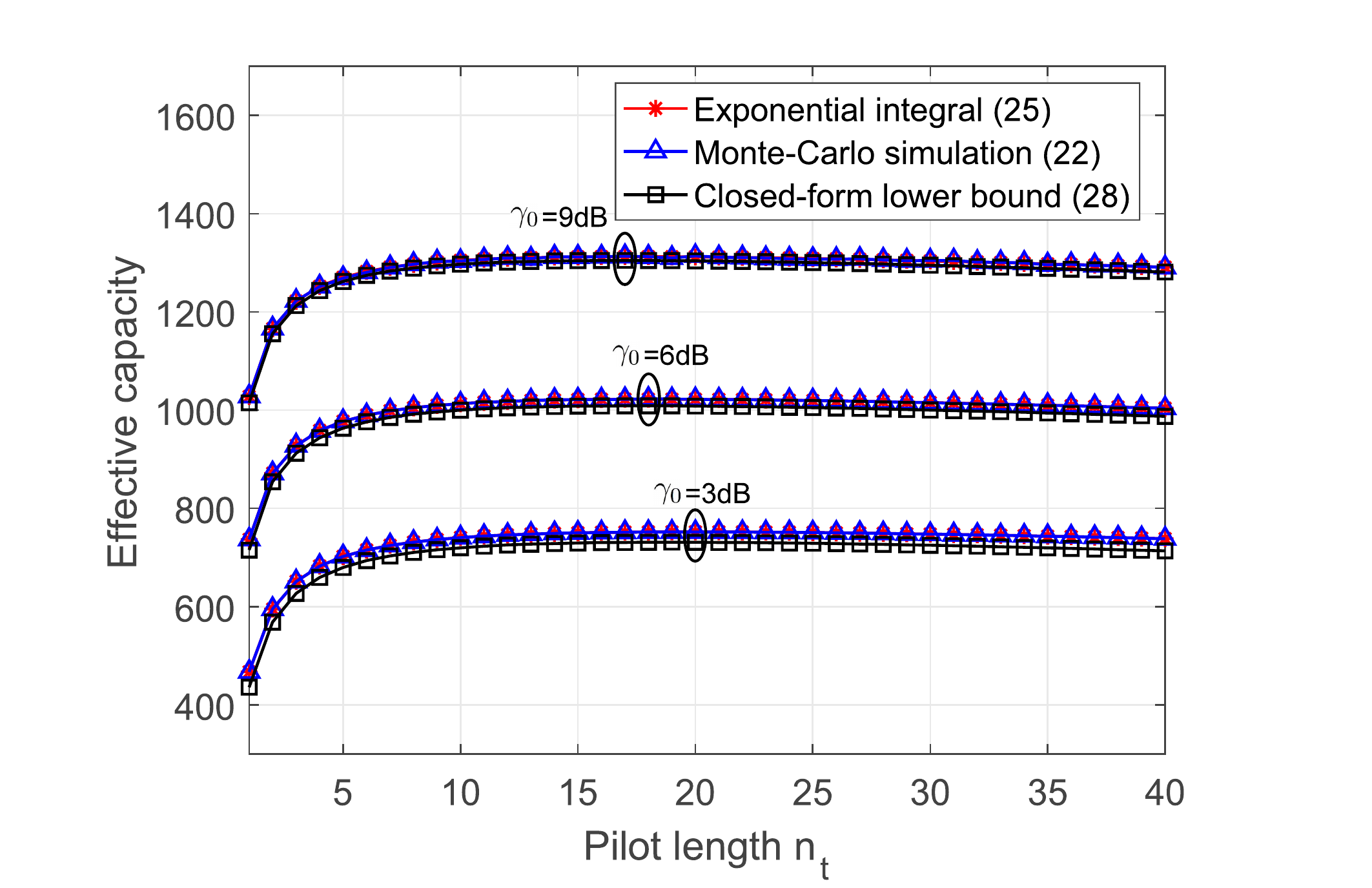}}

\subfigure[Effective capacity v.s. $\varepsilon$ ]
{\label{Fig:EC-e}
\includegraphics[width=0.65\textwidth]{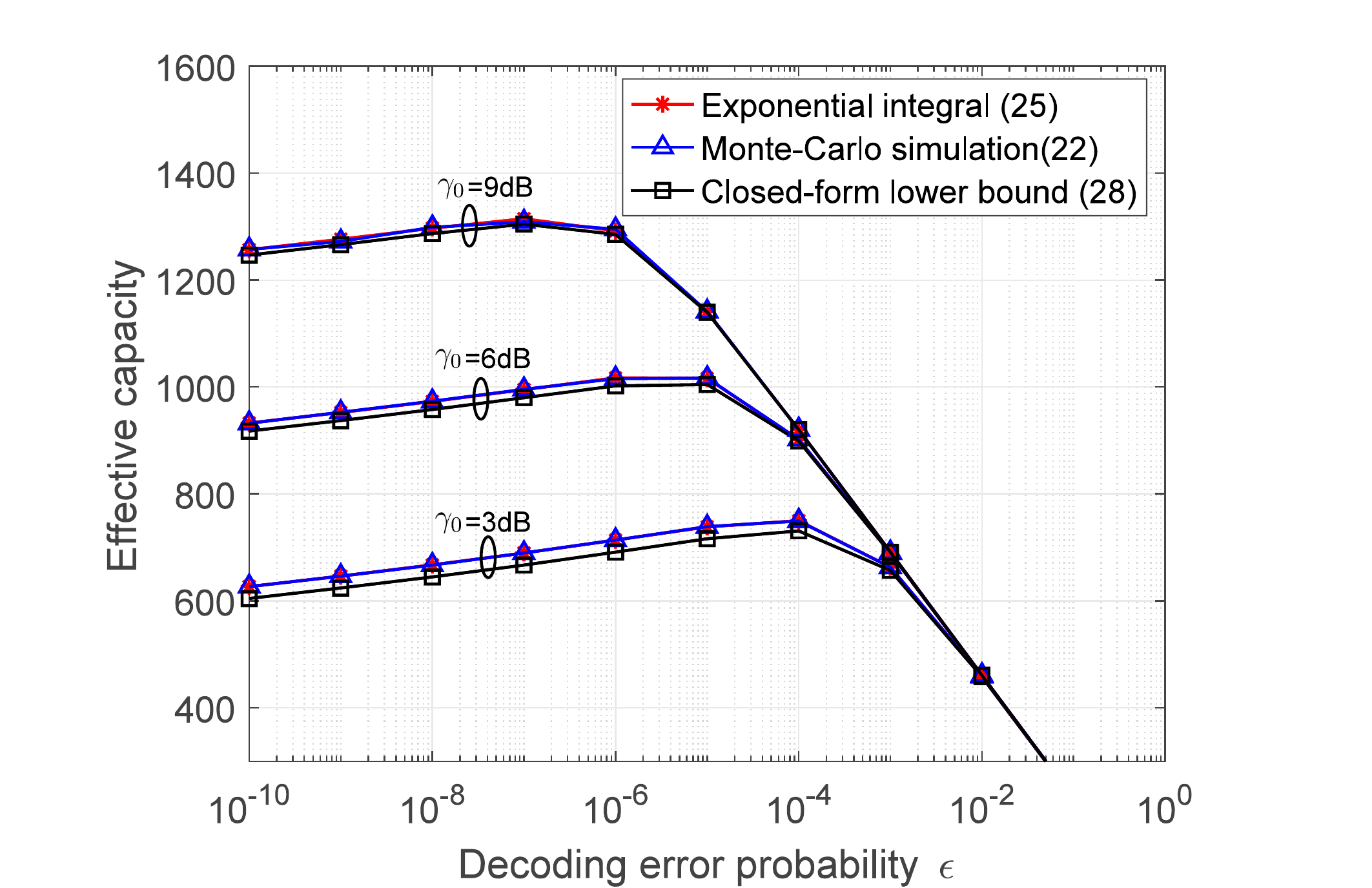}}

\caption{ Impact of  $n_t$, $\varepsilon$ on the effective capacity}
\label{Num1}
\vspace{-0.8cm}
\end{figure}

In Fig.~\ref{Num1}\subref{Fig:EC-nt}, we validate the analysis with respect to the pilot length $n_t \in \mathbb N$  and show the tightness of the closed-form lower bound with   $m=5$ sub-channels. The transmit SNR and corresponding decoding error probability of the three groups are $(3dB, 4.33\times 10^{-6})$, $(6dB, 7.92\times 10^{-5})$,  $(9dB, 2.02\times 10^{-7})$, we generate $10^{10}$ block realizations  to calculate the Monte-Carlo simulation results   based on the definition in (22). It is first seen that the simulation results based on (22) match exactly with the analytical  results based on the exponential integral form (25).
It is also seen that the effective capacity is concave with respect to  $n_t$ and the closed-form expression is  exactly a lower bound of the effective capacity. In addition, the gap between the closed-form lower bound and the effective capacity decreases as the transmit SNR increases.
The trends of the three curves with the same parameters are all concave and the optimal pilot length (points in the circles) that maximizes the effective capacity decreases as the transmit SNR increases.

In Fig.~\ref{Num1}\subref{Fig:EC-e}, we validate the theoretical analysis with respect to the decoding error probability $\varepsilon$ and show the tightness of the closed-form lower bound  for different transmit SNR $\gamma_0$. The number of sub-channels is $m=5$.
The transmit SNR and pilot length for the three groups are $(3dB, 19)$, $(6dB,18)$, $(9dB,17)$ respectively. We generate $10^{10}$ block realizations  to calculate the Monte-Carlo simulation results.
It is seen that the effective capacity is maximized by a specific decoding error probability as in \cite{QiaoAug.2019} and our proposed closed-form lower bound is exactly a lower bound of (\ref{EC-expint}).
The trends of the three curves with the same parameters stay consistent and it can be seen that the optimal decoding error probabilities of the three curves in the same group are almost the same, which also shows the strength of our proposed closed-form lower bound. Furthermore, we can observe that the optimal decoding error probability decreases as the transmit SNR increases when the other parameters stay constant.
The gap among the three groups are very small when $\varepsilon$ is large, while when $\varepsilon$ becomes smaller, the gap increases.  The gap between the closed-form lower bound and the other two curves is inherently due to the exponential integral upper bound  in (\ref{expint-lb}). With  higher transmit SNR, the upper bound is more tight and the gap becomes smaller.

\begin{figure}[t]
\centering
\subfigure[Effective capacity v.s. $\gamma_0$ ] {
\label{Fig:EC-p}
\includegraphics[width=0.65\textwidth]{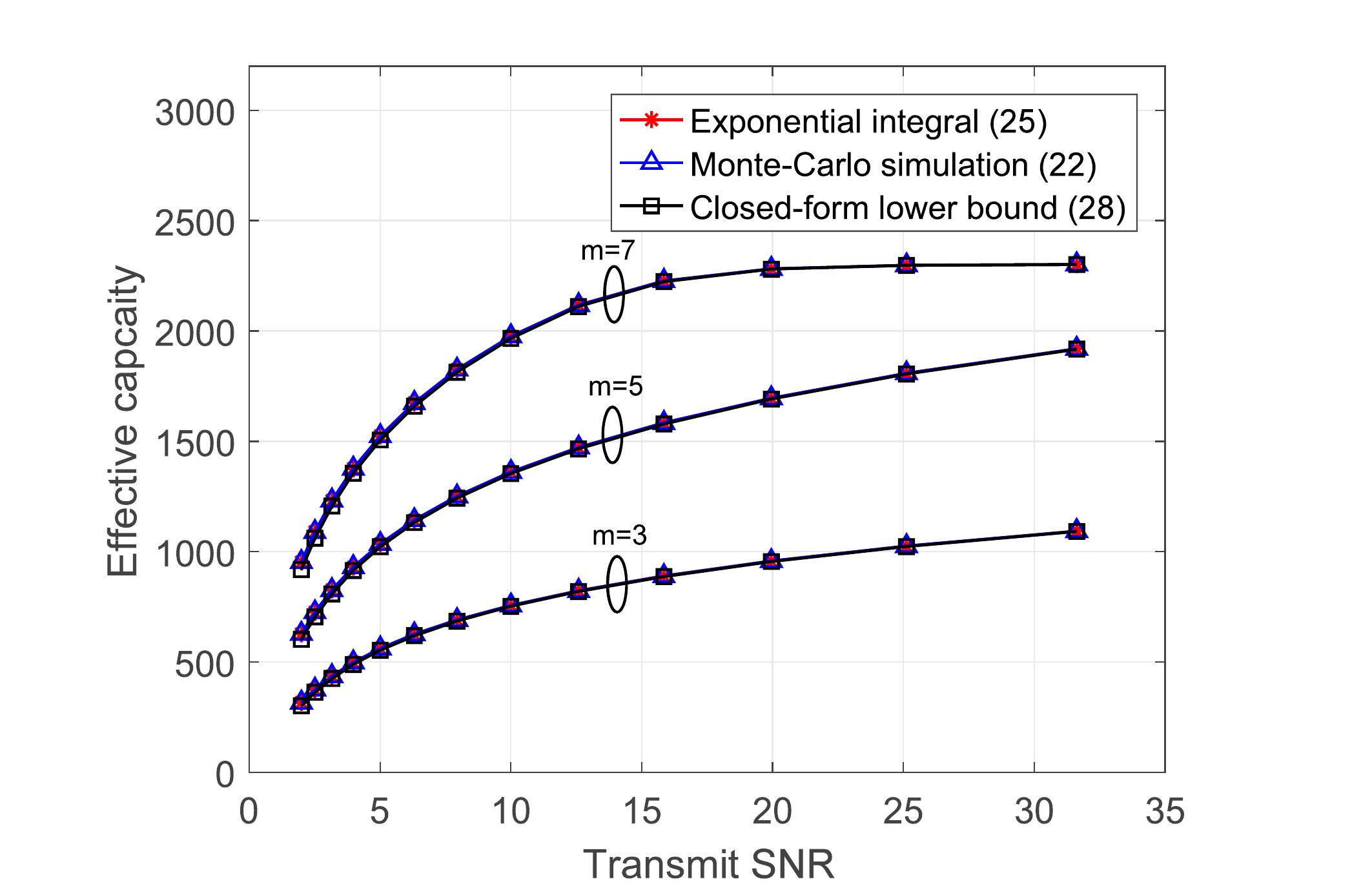} }

\subfigure[Effective capacity v.s. $m$ ] {
\label{Fig:EC-m}
\includegraphics[width=0.65\textwidth]{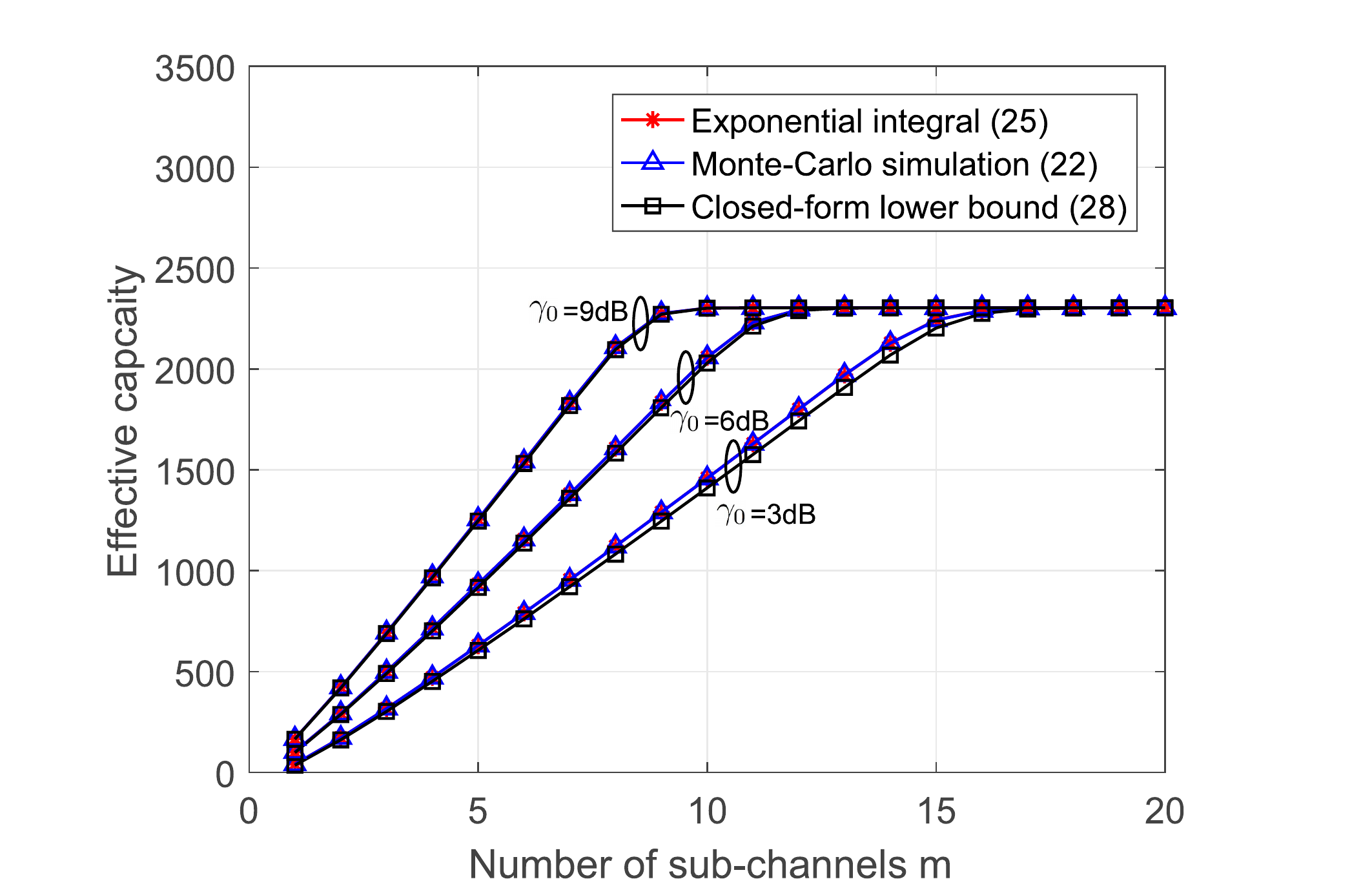} }

\caption{ Impact of $p$, $m$ on the effective capacity}
\label{Num2}
\vspace{-0.8cm}
\end{figure}

In Fig.\ref{Num2}\subref{Fig:EC-p}, we show the effective capacity and the closed-form lower bound as a function of the transmit SNR $\gamma_0$ (not dB for intuition) with  $\varepsilon=10^{-10}$, $ n_t=20 $ and set 3 groups with different values of $m$ for comparison. We generate  $ 10^{11}$ block realizations to evaluate the Mote-Carlo simulation of each point.  Firstly, it is observed that the effective capacity is concave and monotonically increasing with respect to $\gamma_0$, which validates our analysis. Moreover, when $m=7$, the effective capacity approaches the limit $-\frac{1}{\theta}\ln \varepsilon$ as $\gamma_0$ increases.
{With given decoding error probability, there exists a transmit SNR threshold, when the transmit SNR is higher than this threshold, the effective capacity increases very slowly. When the  sub-channel number  $m$ increases, the threshold decreases.} Secondly, the gap between the  effective capacity and closed-form lower bound decreases as the transmit SNR increases.

In Fig.~\ref{Num2}\subref{Fig:EC-m}, we show the effective capacity and the closed-form lower bound with respect to the  sub-channel number $m$.  Here  we assume
$n_t=20$ and show different transmit SNR values for comparison. We generate  $ 10^{11}$ block realizations to evaluate the Mote-Carlo simulation  of each point. It is seen that the effective capacity is monotonically increasing with respect to $m$ and then converges to $-\frac{1}{\theta}\ln \varepsilon$, which proves Theorem 4. {With given decoding error probability, there also exists a sub-channels threshold, when the  sub-channel number is larger than the threshold, the effective capacity increases very slowly. When the transmit SNR increases, the threshold decreases and  the gap between the effective capacity and the closed-form lower bound is smaller.}

From Fig.~\ref{Num2}\subref{Fig:EC-p}, it is seen the closed-form lower bound (\ref{EC-closed}) is very tight. While as shown in from Fig.~\ref{Num2}\subref{Fig:EC-m}, there is a small gap for the group  $\gamma_0=3dB$  when  $m\geq4$. This gap is due to the number of sub-channels on the exponential position amplify the exponential integral approximation error in (\ref{expint-lb}) and thus with larger $m$, the gap is bigger. The gap finally diminishes because all of the three curves within the same group gradually converge to $-\frac{1}{\theta}\ln \varepsilon$.

\subsection{Optimized Effective Capacity}\label{4B}
In this subsection, we present the optimized effective capacity as a function of the  sub-channel number and the transmit SNR. Then we show the corresponding variations of the optimal pilot length $n_t^*$ and the optimal decoding error probability $\varepsilon^*$.
\begin{figure}[t]
\centering
\subfigure[Optimized effective capacity v.s. $m$] {
\label{Fig:EC-opt-m}
\includegraphics[width=0.65\textwidth]{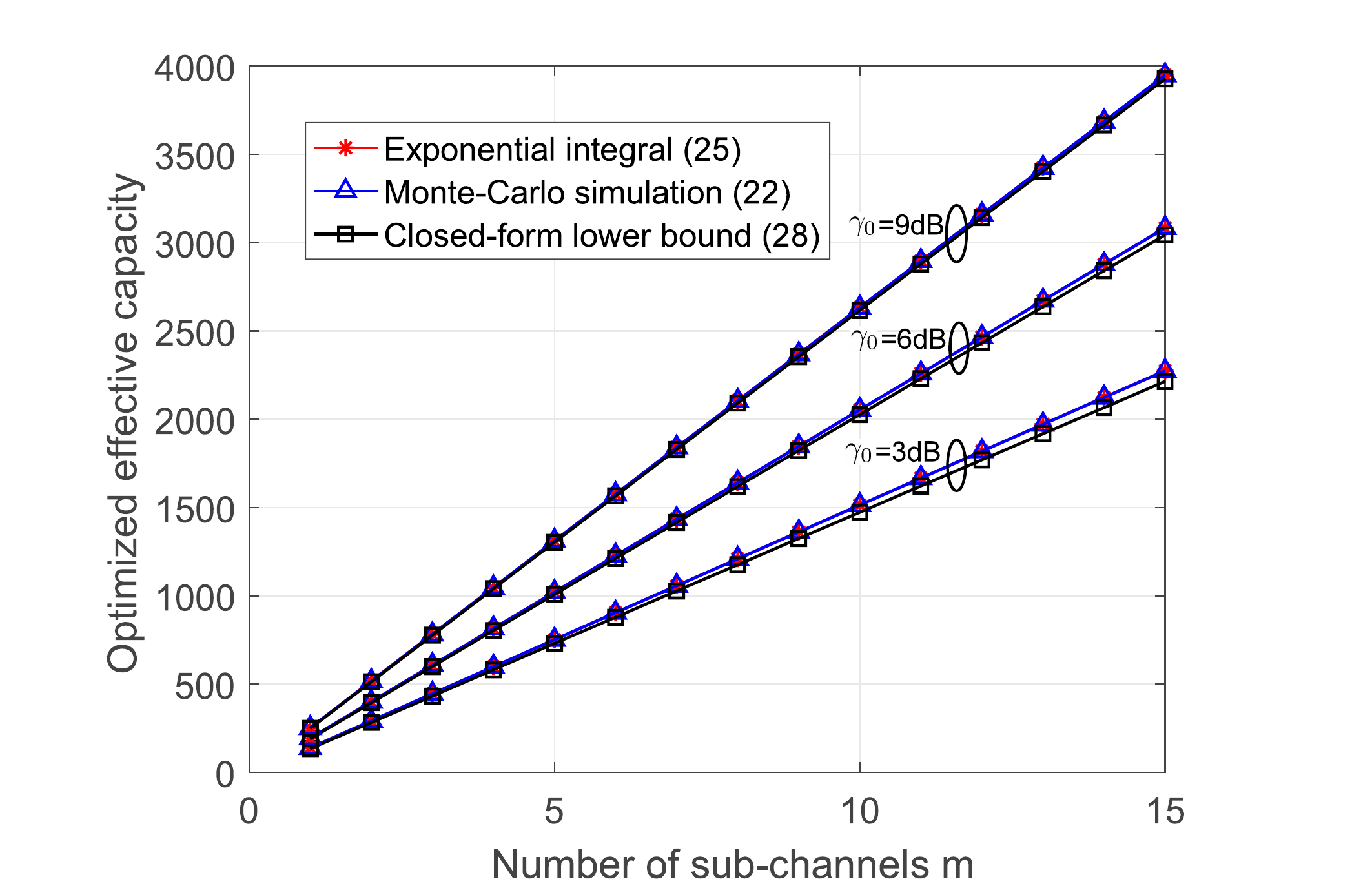}  }

\subfigure[Optimized effective capacity v.s. $p$ ] {
\label{Fig:EC-opt-p}
\includegraphics[width=0.65\textwidth]{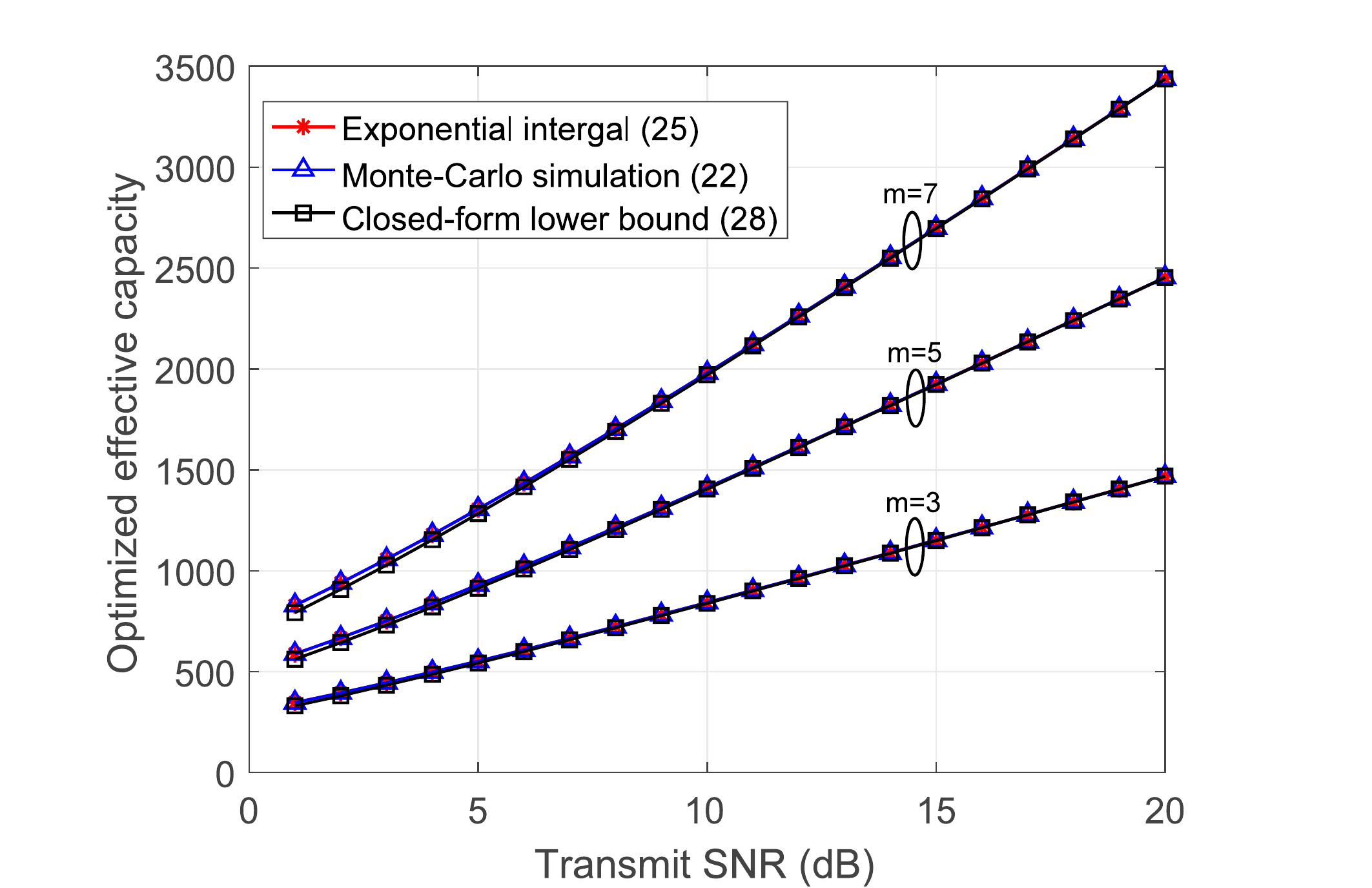}  }
\caption{ Optimized effective capacity v.s. $m$ and $p$}
\label{Num3}
\vspace{-0.8cm}
\end{figure}

In Fig.~\ref{Num3}\subref{Fig:EC-opt-m}, we show the  optimized effective capacity and the closed-form lower bound versus the number of sub-channels with different transmit SNR $\gamma_0$.
We can see that the optimized effective capacity is monotonically increasing alomost linearly with respect to $m$. For higher transmit SNR $\gamma_0$, the slope of the optimized effective capacity is larger.

We show the optimized effective capacity as function of the transmit SNR $\gamma_0$ (in dB) with different $m$ in Fig.~\ref{Num3}\subref{Fig:EC-opt-p}.
The effective capacity is concave  and  monotonically increasing with respect to the transmit SNR $\gamma_0$ and the closed-form lower bound is very close to the effective capacity.  Furthermore, for larger $m$,  the slope of the optimized effective capacity is also larger.

\begin{figure}[t]
\centering
\subfigure[Optimal pilot length $n_t$ v.s. $m$ ] {
\label{Fig:EC-opt-nt}
\includegraphics[width=0.65\textwidth]{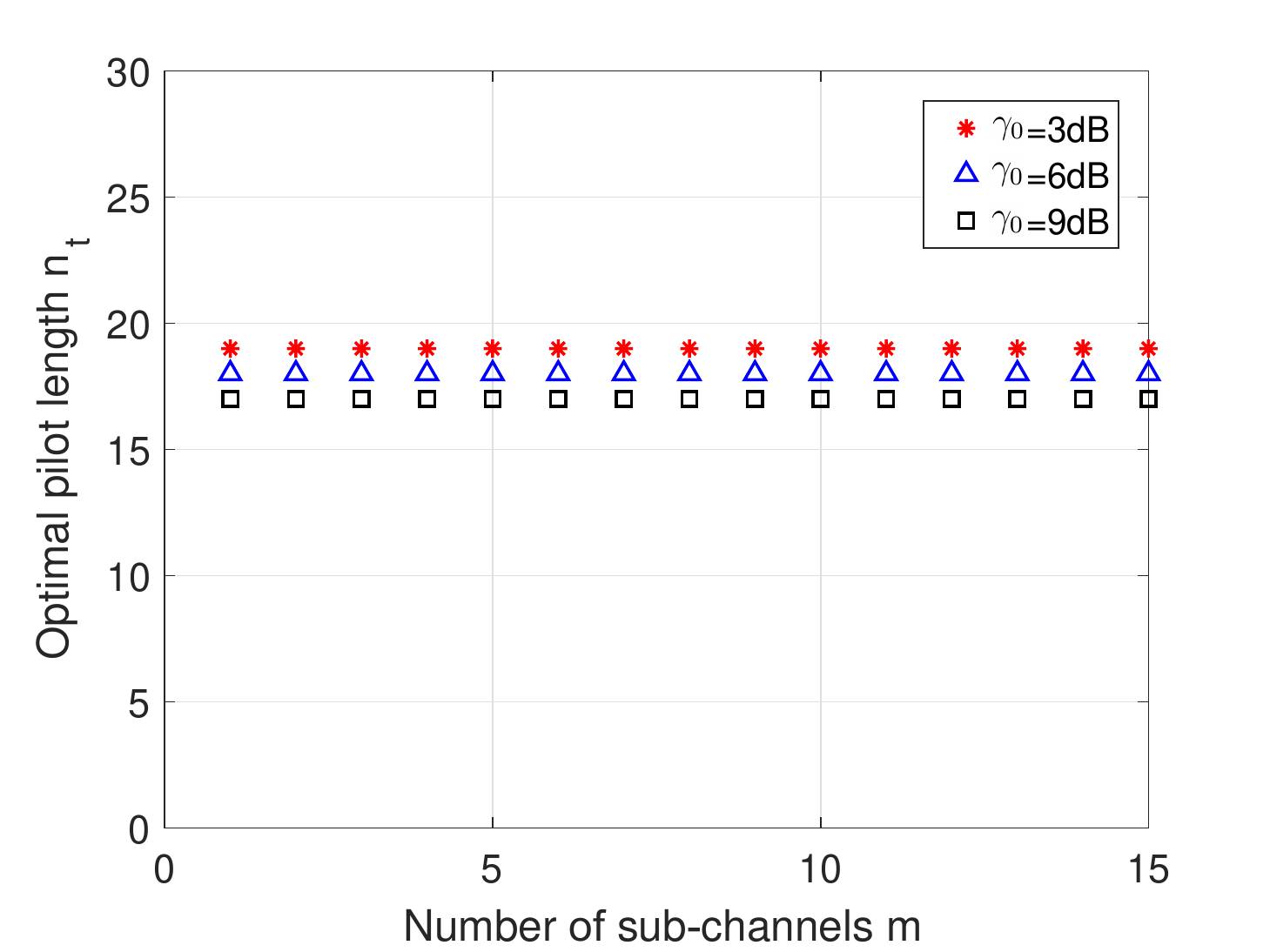}  }
\subfigure[Optimal pilot length $n_t$ v.s. $\gamma_0$] {
\label{Fig:EC-opt-nt-2}
\includegraphics[width=0.65\textwidth]{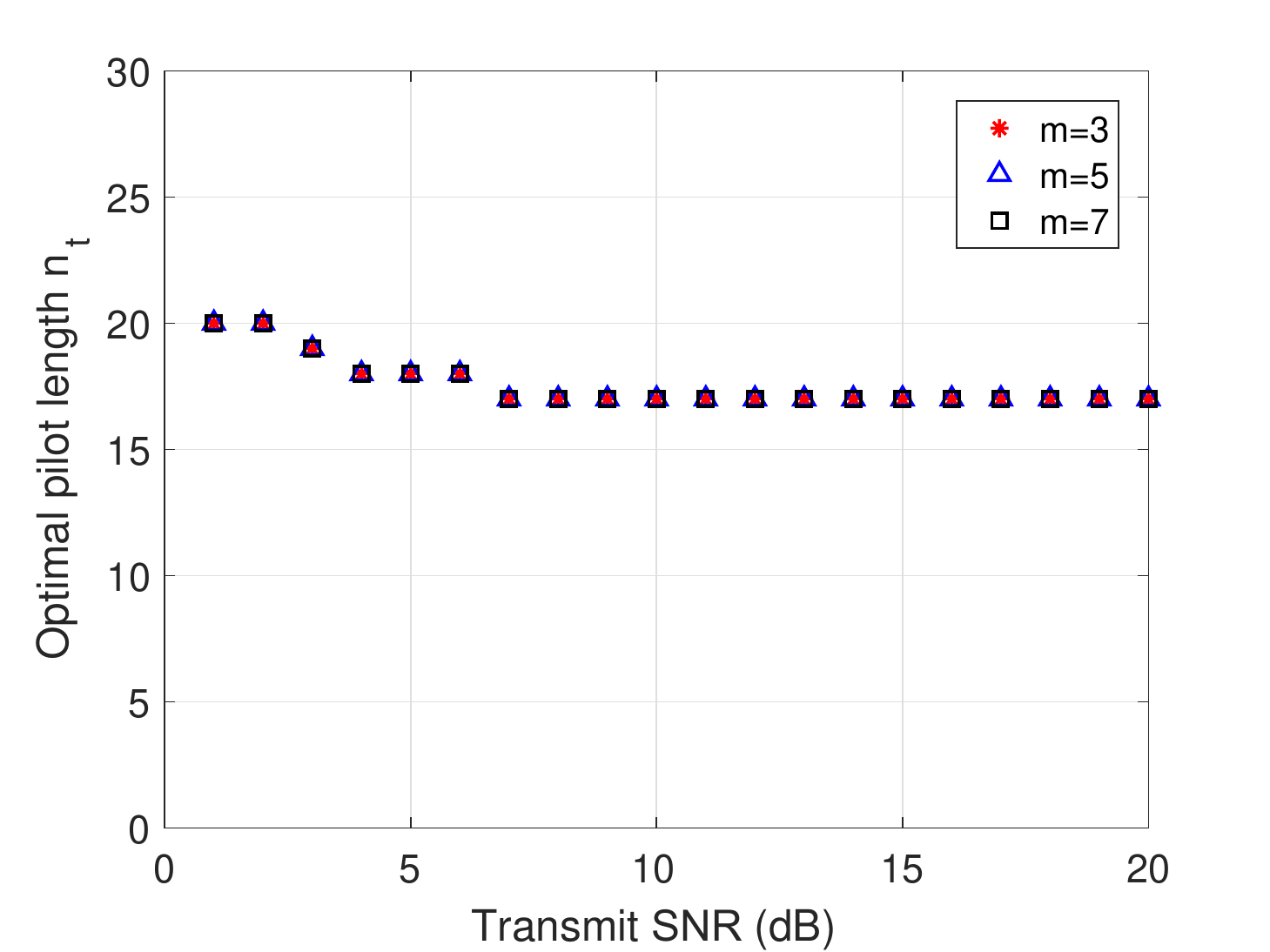}  }
\caption{ Optimal pilot length of the optimized effective capacity}
\label{Num4}
\vspace{-0.8cm}
\end{figure}

Fig.~\ref{Num4}\subref{Fig:EC-opt-nt} shows the optimal pilot length of the optimized effective capacity as a function of $m$ for different transmit SNR $\gamma_0$.  The optimal pilot length of given transmit SNR doesn't change as the number of sub-channels increases. While the optimal pilot lengths with different transmit SNR are different. Specifically,  higher SNR corresponds to shorter pilot length.

Fig.~\ref{Num4}\subref{Fig:EC-opt-nt-2} shows the optimal pilot length of the optimized effective capacity as a function of $\gamma_0$ with different  $m$.  It is seen that the optimal pilot lengths with different $m$ are exactly the same and gradually decrease  and finally converge to a specific value as the transmit SNR  increases in our simulated setting.

\begin{figure}[t]
\centering
\subfigure[Optimal decoding error probability $\varepsilon$ v.s. $m$ ] {
\label{Fig:EC-opt-e}
\includegraphics[width=0.65\textwidth]{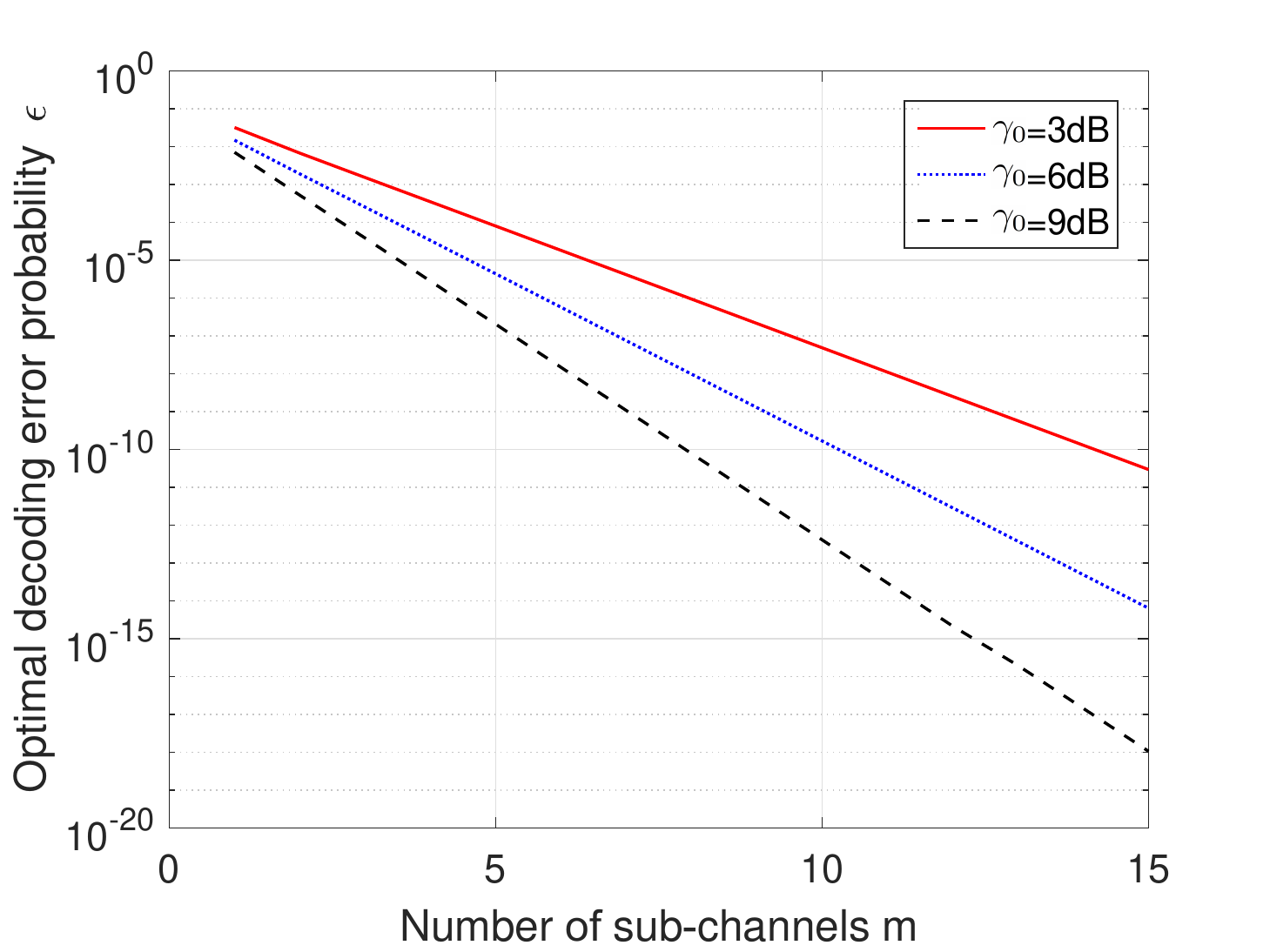}  }

\subfigure[Optimal decoding error probability $\varepsilon$ v.s. $\gamma_0$] {
\label{Fig:EC-opt-e-2}
\includegraphics[width=0.65\textwidth]{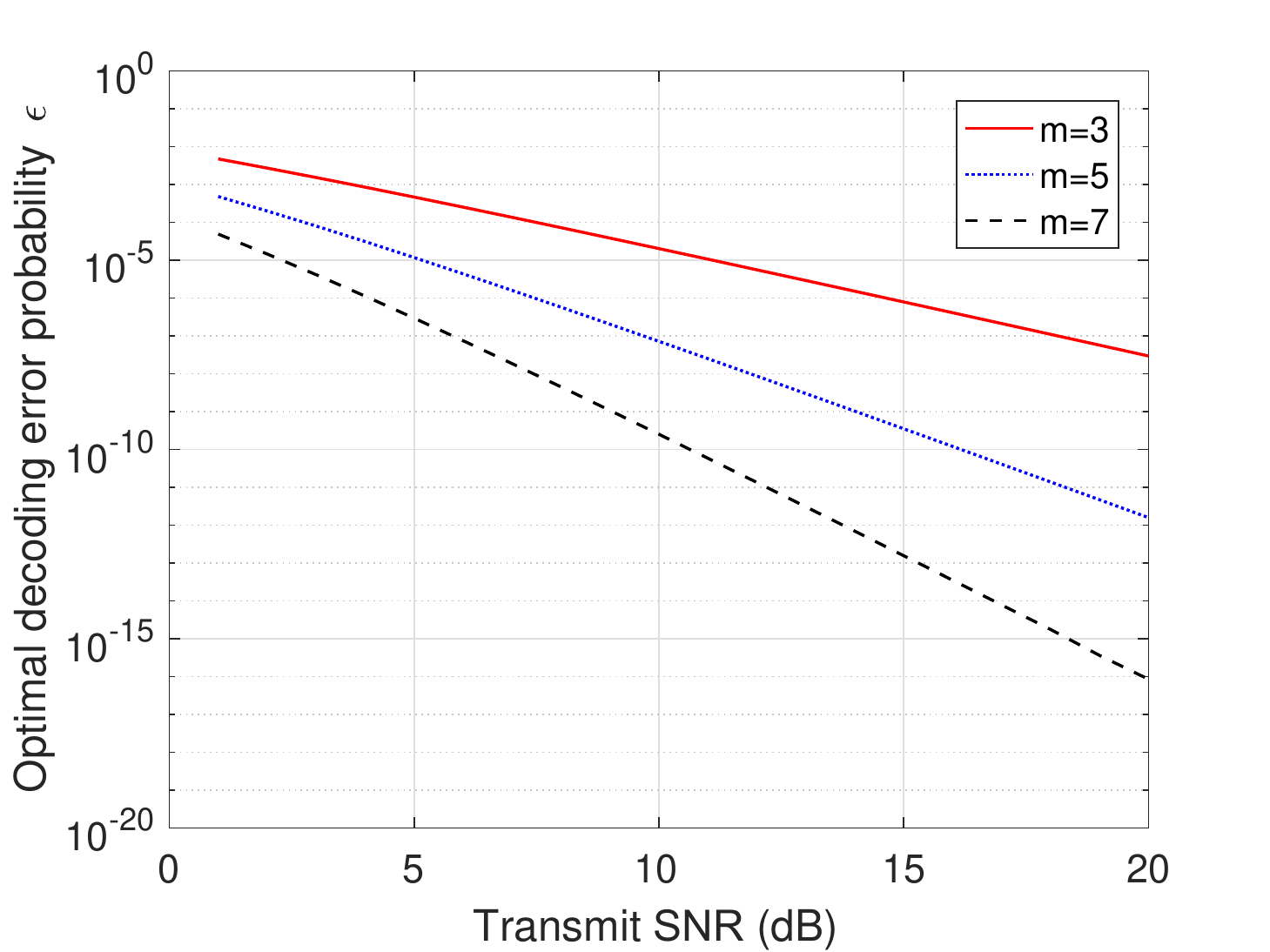} }
\caption{ Optimal decoding error probability of the optimized effective capacity }
\label{Num5}
\vspace{-0.8cm}
\end{figure}

Fig.~\ref{Num5}\subref{Fig:EC-opt-e}  depicts the corresponding optimal values of the decoding error probability as a function of sub-channel number $m$. The optimal decoding error probability decreases exponentially as the sub-channel number increases. For higher transmit SNR  $\gamma_0$, the optimal decoding error probability decays faster.

Fig.~\ref{Num5}\subref{Fig:EC-opt-e-2}  depicts the corresponding optimal values of the decoding error probability as a function of the transmit SNR $\gamma_0$. The optimal decoding error probability decreases exponentially as the transmit SNR (in dB) increases. For larger $m$, the optimal decoding error probability decays faster.

As a final remark, the main insights from the above numerical study are as follows: As $m$ or $\gamma_0$ increases, by optimizing  $n_t$ and $\varepsilon$ iteratively, the optimized effective capacity can increase almost linearly. This is due to the optimal decoding error probability $\varepsilon^*$ is decreasing exponentially, the transmission rate is not constrained by  $\varepsilon^*$, i.e., the tradeoff between decoding error probability and coding rate is met. On the other hand, by optimizing the pilot length $n_t$, the transmission and training tradeoff can also be met. The optimized pilot length keeps constant  as the number of sub-channels changes and decreases gradually as the transmit power $\gamma_0$ increases. This result coincides with intuition, i.e., the channel estimation is related to the channel statistical characteristics  instead of the sub-channel number, and with higher transmit SNR, the shorter pilot length is required.
So in the realistic system, the effective capacity can be calculated rapidly under certain conditions, e.g., with given transmit SNR, if we obtain two points of the curves, we can estimate the effective capacity with more or less sub-channels and the corresponding optimized decoding error probability can also be updated easily utilizing a linear approximation.

\section{Conclusion}
In this paper, we studied the throughput  performance under statistical QoS requirement in terms  of  effective capacity over parallel fading channels at FBL with imperfect CSI.
Firstly, we analyzed the impact of the considered parameters  on the effective capacity.
Then we derived a closed-form lower bound by adopting reasonable approximation of the exponential integral and verified the aforementioned properties.
Furthermore, we propose an alternating optimization-based algorithm to maximize the effective capacity by optimizing decoding error probability and the pilot length iteratively with given sub-channels and transmit SNR based on our proposed closed-form lower bound.
Numerical results validated our analysis. Then we showed optimized effective capacity and corresponding optimal pilot length as well as optimal decoding error probability. As a results, this work can provide some insights to guide the design of key system parameters in practical URLLC systems.


\begin{appendices}
\section{Proof of Theorem 2} \label{Theorem2-proof}

Firstly, we fix $\varepsilon$ and define the following function  with respect to $m$, $\gamma_0$ and $n_t$ in (\ref{EC-simp}).
\begin{align}
L(m, \gamma_0, n_t)&=-\frac{1}{\theta  } \ln   \left\{ \mathbb E \left[ e^{-\theta n_d\log_2 (1+\hat \gamma_i)+\theta n_d \sqrt {\frac{\log_2^2e}{mn_d}} Q^{-1}(\varepsilon) }\right] ^m   \right \}.  \label{EC-simp}
\end{align}
Then, we transform the expression in (\ref{EC-simp}) as shown in (\ref{funcL}).
\begin{align}
L(m, p, n_t)&=-{\sqrt {mn_d} Q^{-1}(\varepsilon)\log_2 e}-\frac{m}{\theta} \ln \left\{\mathbb E \left [ e^{-\theta n_d \log_2 (1+\hat \gamma_i)}\right] \right\} \label{simp-1}\\
&\approx -{ \sqrt {mn} Q^{-1}(\varepsilon)\log_2 e (1-\frac{n_t}{2n})}-\frac{m}{\theta} \ln \left\{\mathbb E \left [ e^{-\theta n_d \log_2 (1+\hat \gamma_i)}\right]  \right\},
\label{funcL}
\end{align}

From (\ref{simp-1}) to (\ref{funcL}),  the following approximation is  employed
\begin{align}
 \sqrt{1-\frac{n_t}{n}}\approx 1-\frac{n_t}{2n},\label{sqrt-approx}
\end{align}
where this approximation is accurate enough when $\frac{n_t}{n}\in (0,0.2)$. Due to the linearity  of the first term, it doesn't alter the convexity of $L(n_t)$, then the focus is on the latter function. Note that the latter term is quite complicated and the pilot length $n_t$ is implicitly included in $n_d$ and $\hat \gamma_i$, we will show the convexity of $e^{-\theta n_d \left(\log_2 (1+\hat \gamma_i)\right)}$ and then show the convexity of the overall latter term through compound function with respect to $n_t$.  Denote
\begin{align}
\Psi(p, n_t)=\log_2 (1+\hat \gamma_i).
\end{align}

 The first order and second order partial derivatives of $\Psi(p, n_t)$ with respect to $n_t$ are derived as follows:
 \begin{equation}
 \frac{\partial \Psi}{\partial n_t}=\frac{\partial \Psi}{\partial \hat \gamma_i}\frac{\partial \hat \gamma_i}{\partial n_t}=\frac{\log_2 e}{1+\hat \gamma_i}\frac{\gamma_0^2(\gamma_0+1)x_i}{(1+\gamma_0 +\gamma_0 n_t)^2}\geq 0,
\end{equation}
\begin{equation}
\begin{aligned}
\frac{\partial^2 \Psi}{\partial n_t^2}=&\frac{\partial ^2\Psi}{\partial \hat \gamma^2}\left(\frac{\partial \hat \gamma}{\partial n_t}\right)^2+\frac{\partial \Psi}{\partial \hat \gamma}\frac{\partial^2 \hat \gamma}{\partial n_t^2}\\
=&- \frac{\log_2 e}{(1+\hat \gamma_i)^2}  \left(\frac{\gamma_0^2(\gamma_0+1)x_i}{(1+\gamma_0 +n_t \gamma_0 )^2}\right) ^2  \\
&-\frac{\log_2 e}{1+\hat \gamma_i}\frac{2\gamma_0^3(\gamma_0+1)x_i}{(1+\gamma_0 + n_t \gamma_0 )^3}<0.
\end{aligned}
\end{equation}
Therefore, $\Psi(n_t)$ is concave over $n_t$.
 Then the function $\Phi(n_t)=(n-n_t)\Psi(n_t)$ is also concave with respect to $n_t$. Because

 \begin{align}
\frac{\partial ^2 \Phi}{\partial n_t^2}=-\frac{\partial \Psi}{\partial n_t}-\frac{\partial \Psi}{\partial n_t}+(n-n_t)\frac{\partial ^2 \Psi}{ \partial n_t^2}<0.
\end{align}
The function $-\Phi(n_t)$ is convex over $n_t$,
and thus, the function $e^{-\theta \Phi(n_t)}$ is also convex respect to $n_t$.
To show the convexity of $\ln \mathbb E \left[ e^{-\theta \Phi(n_t)}\right] $, Lemma 1 is proposed.

\begin{mylem}
Denote the function
\begin{align}
h(x)=\ln \left(\sum \limits_{i=1}^n e^{a_i x} \right),
\end{align}
$h(x)$ is convex with respect to $x$.
\end{mylem}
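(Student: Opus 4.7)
The plan is to prove convexity of $h(x)=\ln\!\left(\sum_{i=1}^{n} e^{a_i x}\right)$ by the standard second-derivative test: I will compute $h'(x)$ and $h''(x)$ directly, and then show $h''(x)\ge 0$ by a Cauchy--Schwarz (equivalently, nonnegative-variance) argument. This is the classical convexity of the log-sum-exp functional and requires no machinery beyond elementary calculus, so I expect no serious obstacle; the only subtlety is packaging the argument cleanly.

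First I would introduce the shorthand $S(x):=\sum_{i=1}^{n} e^{a_i x}>0$, so that $h(x)=\ln S(x)$. A single application of the chain rule gives
\begin{equation*}
h'(x)=\frac{S'(x)}{S(x)}=\frac{\sum_{i=1}^{n} a_i e^{a_i x}}{\sum_{i=1}^{n} e^{a_i x}},
\end{equation*}
and the quotient rule together with $S''(x)=\sum_{i} a_i^{2} e^{a_i x}$ yields
\begin{equation*}
h''(x)=\frac{S''(x)S(x)-\bigl(S'(x)\bigr)^{2}}{S(x)^{2}}
=\frac{\bigl(\sum_{i} a_i^{2} e^{a_i x}\bigr)\bigl(\sum_{i} e^{a_i x}\bigr)-\bigl(\sum_{i} a_i e^{a_i x}\bigr)^{2}}{\bigl(\sum_{i} e^{a_i x}\bigr)^{2}}.
\end{equation*}

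Next I would argue that the numerator is nonnegative via the Cauchy--Schwarz inequality with the pairing $u_i=a_i e^{a_i x/2}$ and $v_i=e^{a_i x/2}$, which gives
\begin{equation*}
\Bigl(\sum_{i} a_i e^{a_i x}\Bigr)^{2}=\Bigl(\sum_{i} u_i v_i\Bigr)^{2}\le \Bigl(\sum_{i} u_i^{2}\Bigr)\Bigl(\sum_{i} v_i^{2}\Bigr)=\Bigl(\sum_{i} a_i^{2} e^{a_i x}\Bigr)\Bigl(\sum_{i} e^{a_i x}\Bigr).
\end{equation*}
Hence $h''(x)\ge 0$ for every $x$, and $h$ is convex. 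Equivalently, if one introduces the weights $p_i(x)=e^{a_i x}/S(x)$, then $h'(x)=\sum_{i} a_i p_i(x)$ is a weighted mean of the $a_i$, and a short calculation shows $h''(x)=\sum_{i} a_i^{2} p_i(x)-\bigl(\sum_{i} a_i p_i(x)\bigr)^{2}$, which is the variance of $\{a_i\}$ under the probability weights $\{p_i(x)\}$, hence nonnegative. This completes the proof, and either the Cauchy--Schwarz or the variance formulation may be used in the write-up.
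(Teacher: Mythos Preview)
Your proof is correct and follows essentially the same route as the paper: both compute $h''(x)$ and show the numerator $S''S-(S')^2$ is nonnegative. The only cosmetic difference is that the paper expands the products into double sums and symmetrizes to get $\sum_{i<j}(a_i-a_j)^2 e^{(a_i+a_j)x}\ge 0$, which is precisely the elementary proof of the Cauchy--Schwarz inequality you invoke.
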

\begin{proof}
The first order and second order derivatives of $h(x)$ are:
\begin{align}
h'(x)= \frac{\sum \limits_{i=1}^n a_i e^{a_i x}}{\sum \limits_{i=1}^n e^{a_i x}} \geq 0,
\end{align}
\vspace{-0.6cm}
\begin{align}
h''(x)=& \frac{\sum \limits_{i=1}^n \sum \limits_{j=1}^n a_i^2e^{(a_i +a_j)x}-\sum \limits_{i=1}^n\sum \limits_{j=1}^n a_i a_j e^{(a_i+a_j)x}}{\left(\sum \limits_{i=1}^n e^{a_i x}\right)^2}\\
=&\frac{ \sum \limits_{i=1}^n    \sum \limits_{j=i+1}^n \left( a_i^2 +a_j^2 -2a_i a_j \right) e^{(a_i+a_j)x}  }{\left(\sum \limits_{i=1}^n e^{a_i x}\right)^2} \geq  0.
\end{align}
\end{proof}
Then consider a compound function $h(g(x))$ which is convex if $g(x)$ is convex with respect to $x$.
Hence  -$\ln \mathbb E \left[ e^{-\theta \Phi(n_t)}\right]  $ is concave with respect to $n_t$ and therefore $L(n_t)$ is concave over $n_t$.  From this analysis result, there must exist an optimal $n_t$ that maximizes the function $L(n_t)$.

When taking the decoding error probability into consideration, the monotonicity of $C_E(\theta)$ is the same as $L(n_t)$.
More specifically, $L(n_t)$  can be viewed as a compound function consists of $L_1(x)=-\frac{1}{\theta}\ln(x)$ and $L_2(n_t)= \mathbb E \left\{e^{-\theta m n_d  R (\{\hat \gamma_i\}_1^m, \varepsilon)  } \right \}$.
 Now that the function $L(n_t)=L_1(L_2(n_t))$ is concave and the following condition must hold
\begin{align}
\frac{\partial ^2 L}{\partial n_t^2}&=\frac{\partial ^2 L_1}{\partial L_2^2} \left(\frac{\partial L_2}{\partial n_t}\right)^2+\frac{\partial L_1}{\partial L_2}\frac{\partial ^2 L_2}{\partial n_t^2} \\
&=\frac{1}{L_2(n_t)}\left\{\frac{1}{L_2(n_t)}\left(\frac{\partial L_2}{\partial n_t}\right)^2-\frac{\partial ^2 L_2}{\partial n_t^2}\right\}\leq0.
\end{align}
 For notation convenience, denote
 \begin{align}
 d(n_t)=(1-\varepsilon)L_2(n_t) +\varepsilon,
 \end{align}
 and observe the expression in (\ref{EC-mc-fbl}), $T(n_t)$ can be expressed as
 \begin{align}
 T(n_t)=L_1(d(n_t)).
 \end{align}
 The second order derivative of $T(n_t)$
\begin{align}
\frac{\partial^2 T}{\partial n_t^2}&=\frac{\partial ^2 L_1}{\partial d^2}\left(\frac{\partial d}{\partial n_t}\right)^2+\frac{\partial L_1}{\partial d}\frac{\partial d^2}{\partial n_t^2} \\
&=\frac{1-\varepsilon}{L_2(n_t)+\varepsilon}\left\{ \frac{1-\varepsilon}{L_2(n_t)+\varepsilon}\left(\frac{\partial L_2}{\partial n_t}\right)^2-\frac{\partial^2 L_2}{\partial n_t^2}\right\}\leq0.
\end{align}
{Based on this, we conclude that the effective capacity in (\ref{EC-mc-fbl}) is concave with respect to the pilot length $n_t$.}
\section{Proof of Theorem 3} \label{Theorem3}
Due to $\gamma_0=\frac{p}{\sigma^2}$  and $\sigma^2$ is a constant, we will substitute $p$ by $\gamma_0$ for notation simplicity.
The first order partial  derivative of  $\Psi$ with respect to $\gamma_0$ is as follows
\begin{align}
\frac{\partial \Psi}{\partial \gamma_0}&=\frac{\partial \Psi}{\partial \hat \gamma_i}\frac{\partial \hat \gamma_i}{\partial \gamma_0}\\
&=\frac{\log_2 e}{1+\hat \gamma_i}\frac{n_t\gamma_0\left(n_t\gamma_0+\gamma_0+2\right)}{(1+\gamma_0 + n_t \gamma_0 )^2}x_i \geq 0,
\end{align}
and the second order partial  derivative is shown in (\ref{52})-(\ref{55}).
\begin{align}
\frac{\partial ^2\Psi}{\partial \gamma_0^2}&=\frac{\partial ^2\Psi}{\partial \hat \gamma_i^2}\left(\frac{\partial \hat \gamma_i}{\partial \gamma_0 }\right)^2+\frac{\partial \Psi}{\partial \hat \gamma_i}\frac{\partial^2 \hat \gamma_i}{\partial \gamma_0^2 \label{52}}\\
&=- \frac{\log_2 e}{(1+\hat \gamma_i)^2}\left(\frac{n_t\gamma_0\left(n_t\gamma_0+\gamma_0+2\right)}{(1+\gamma_0 + n_t \gamma_0)^2}x_i\right)^2  +   \frac{\log_2 e}{1+\hat \gamma_i}\frac{2n_t}{(1+\gamma_0 + n_t \gamma_0 )^3}x_i \\
&=- \frac{\log_2 e n_t x_i}{(1+\hat \gamma_i)(1+\gamma_0+n_t\gamma_0)^3}\left[ \frac{(2+\gamma_0+n_t\gamma_0)^2n_t\gamma_0^2x_i}{1+\gamma_0+n_t\gamma_0} -2 \right]\\
&=- \frac{\log_2 e n_t x_i}{(1+\hat \gamma_i)(1+\gamma_0+n_t\gamma_0)^3}\left[ (2+\gamma_0+n_t\gamma_0)^2\hat \gamma_i  -2 \right] > 0, \label{55}
\end{align}
where the last inequality in (\ref{55}) is due  to the assumption $\hat \gamma_i >-3dB$.

Based on the above analysis, $\Psi(p,n_t)$ is concave and monotonically increasing with respect to $p$, we can affirm that the function $L(p)$ is concave and monotonically increasing with respect to $p$. Then we apply the method adopted in the proof of  Theorem 2 and can prove that the effective capacity is also concave and monotonically increasing with respect to $p$.

\section{Proof of Theorem 4} \label{Theorem4}
The first order partial derivative of $L(m, \gamma_0, n_t)$ with respect to $m$ is
\begin{equation}
\begin{aligned}
\frac{\partial L}{\partial m}=& \sqrt{m n_d}Q^{-1}(\varepsilon)\log_2 e \\ &-\frac{m}{\theta}\ln\left\{\mathbb E\left\{ e^{-\theta n_d\log_2(1+\hat \gamma_i)}\right\} \right\},
\end{aligned}
\end{equation}
and the  second order partial derivative of $L(m,p ,n_t)$ with respect to $m$ is
\begin{align}
\frac{\partial^2 L}{\partial m^2}&=-\frac{1}{4} \sqrt\frac{n_d}{m^3}Q^{-1}(\varepsilon)\log_2 e  < 0.
\end{align}

 \begin{equation}
 \begin{aligned}
\Gamma(m, p, \alpha)
=&- \sqrt {mn}  Q^{-1}(\varepsilon) \log_2 e(1-\frac{\alpha}{2})\\ & +\frac{m}{\theta}\ln \left(  \frac{-\theta' n^2 \gamma_0^2  \alpha^2 +(\theta' n^2 \gamma_0^2- n \gamma_0^2+ \gamma_0 n)\alpha +1+\gamma_0}{1+\gamma_0+  n \gamma_0 \alpha} \right).
\label{58}
 \end{aligned}
 \end{equation}

 Given the other parameters,  $L(m, \gamma_0, n_t)$ is concave and is monotonically decreasing with respect $m$, hence the effective capacity $C_E(\theta)$ is monotonically increasing in $m$.
\section{Proof of Theorem 5} \label{proposition1}
Substituting average received SNR $G$ in (\ref{received-SNR}) to (\ref{EC-closed-e}), we can obtain equation (\ref{58}).
According to the definition of  $\Gamma(m, p, \alpha)$,  we can see that the first term of $\Gamma(m,p, \alpha)$ doesn't alter the convexity of $\Gamma(m,p, \alpha)$ over $\alpha$. Then we need to determine the convexity of the latter term.  
The latter term can be viewed as a compound function, where the outer function is negative $\log$ function and it is a convex function and is monotonically decreasing.  The inner function can be simplified as a fractional function as follows
\begin{equation}
{O(\alpha)=\frac{-\theta' n^2 \gamma_0^2  \alpha^2 +(\theta' n^2 \gamma_0^2 - n\gamma_0 ^2+  n\gamma_0)\alpha +1+\gamma_0}{1+\gamma_0+    n\gamma_0 \alpha},}
\end{equation}
the second order derivatives of $O(\alpha)$ can be derived as follows
\begin{equation}
\begin{aligned}
O''(\alpha)=&-\frac{2(1+\gamma_0)(n\gamma_0)^2\left[ (\theta' n_d - 1) \gamma_0+ \theta' (1+\gamma_0)  \right]}{(1+\gamma_0+ n \gamma_0  \alpha)^3}.
\end{aligned}
\end{equation}
Due to the assumption, the inequality $\theta'n_d-1 > 1$ holds and thus $O''(\alpha)< 0$ holds. Therefore,  $O(\alpha)$ is a concave function with respect to $\alpha$. According to the property of the compound function,  the function $\Gamma(\alpha)$ is also concave with respect to $\alpha$.

Due to the fact that the constant terms corresponding with $\varepsilon$ in (\ref{EC-closed}) do not alter the convexity and monotonicity.
Theorem 5 is proved.
\section{Proof of Theorem 6} \label{proposition2}
Here we also substitute $p$ by $\gamma_0$ for simplicity. The second order partial derivative of $\Gamma(m, \gamma_0, \alpha)$ with respect to $\gamma_0$ is shown in (\ref{61})-(\ref{64}), where  the first inequality in (\ref{ineq(1)}) is derived from the fact that $\theta' n_d -1>1$ and the second inequality is derived due to $G>-3$dB.
\begin{align}
\frac{\partial ^2 \Gamma}{\partial \gamma_0^2 }=&\frac{\partial^2 \Gamma}{\partial  G^2}\left (\frac{\partial G}{\partial \gamma_0 }\right)^2+\frac{\partial \Gamma}{\partial  G}\frac{\partial ^2 \ G}{\partial \gamma_0^2 } \label{61}\\
=&-\frac{m}{\theta} \frac{(\theta' n_d -1)^2}{[(\theta' n_d -1)G+1]^2} \left(\frac{[n_t\gamma_0(n_t\gamma_0+\gamma_0+2)]^2}{(1+\gamma_0+n_t\gamma_0)^2}\right)^2 +\frac{m}{\theta} \frac{(\theta'n_d-1)}{(\theta'n_d-1)G+1} \frac{2n_t}{(1+\gamma_0+n_t\gamma_0)^3} \\
=&-\frac{m}{\theta}\frac{(\theta' n_d -1)n_t}{(1+\gamma_0 +n_t\gamma_0 )^3((\theta' n_d -1) G +1)}\times \left(\frac{ \left(n_t\gamma_0+ \gamma_0+2\right)^4(\theta' n_d -1)G}{(\theta' n_d -1) G+1}-2\right)\\
=&-\frac{m}{\theta}\frac{(\theta' n_d -1)n_t}{(1+\gamma_0 +n_t\gamma_0 )^3((\theta' n_d -1) G +1)}\times \left(\frac{ \left(n_t\gamma_0+ \gamma_0+2\right)^4G}{G+\frac{1}{(\theta' n_d -1) }}-2\right) \label{64}\\
<&-\frac{m}{\theta}\frac{(\theta' n_d -1)n_t}{(1+\gamma_0 +n_t\gamma_0 )^3((\theta' n_d -1) G +1)}\times \left(\frac{ \left(n_t\gamma_0+ \gamma_0+2\right)^4G}{G+1}-2\right)\\
<&-\frac{m}{\theta}\frac{(\theta' n_d -1)n_t}{(1+\gamma_0 +n_t\gamma_0 )^3((\theta' n_d -1) G +1)}\times \left(\frac{ 16G}{G+1}-2\right) \label {ineq(1)}<0.
\end{align}

Similarly, due to the fact that the constant terms corresponding with $\varepsilon$ in (\ref{EC-closed}) do not alter the convexity and monotonicity, Theorem 6 is proved.

\end{appendices}
\bibliographystyle{IEEEtran}
\bibliography{myref}
\end{document}